
\documentclass[a4paper,12pt]{elsarticle}
\usepackage{amsmath,amssymb,tabularx,hyperref}
\usepackage{xspace,color}

\usepackage{graphicx}
\usepackage{epsfig}

\usepackage{latexsym}

\usepackage{url}

\usepackage{algorithm}
\usepackage[noend]{algpseudocode}
\renewcommand{\algorithmicrequire}{\textbf{Input: }}
\renewcommand{\algorithmicensure}{\textbf{Output: }}


\newtheorem{theorem}{Theorem}[section]
 \newtheorem{corollary}[theorem]{Corollary}
 \newtheorem{lemma}[theorem]{Lemma}
 \newtheorem{proposition}[theorem]{Proposition}
 \newtheorem{definition}[theorem]{Definition}
 \newtheorem{example}[theorem]{Example}
 \newtheorem{remark}[theorem]{Remark}
 \newproof{proof}{Proof}

\newcommand{\vir}[1]{``#1''}
\newcommand{\BWT}{\ensuremath{BWT}\xspace}
\newcommand{\bwt}{\ensuremath{bwt}}
\newcommand{\ABWT}{\ensuremath{ABWT}\xspace}
\newcommand{\abwt}{\ensuremath{abwt}}
\newcommand{\LF}{LF}

\newcommand{\Pisig}{\Pi_\Sigma}
\newcommand{\pid}{Id}
\newcommand{\pre}{Rev}
\newcommand{\rank}{\mathop{\mathrm{rank}}\nolimits}

\newcommand{\plex}{\preceq_{lex}}
\newcommand{\palt}{\preceq_{alt}}
\newcommand{\lalt}{\prec_{alt}}
\newcommand{\lex}{{lex}}
\newcommand{\alt}{{alt}}

\newcommand{\Oh}{{\cal O}}
\def\rle(#1){\mathop{\mathrm{rle}}\nolimits(#1)}

\def\BCR{{\rm BCR}\xspace}

\def\Msort#1{M_{#1}(w)}
\long\def\ignore#1{}

\makeatletter
\let\@@pmod\pmod
\DeclareRobustCommand{\pmod}{\@ifstar\@pmods\@@pmod}
\def\@pmods#1{\mkern4mu({\operator@font mod}\mkern 6mu#1)}
\makeatother

\algnewcommand\KwAnd{\textbf{and }}

\begin{document}

\begin{frontmatter}
\title{The Alternating  BWT: an algorithmic perspective}

%
%
\author[Palermo]{Raffaele Giancarlo}\ead{raffaele.giancarlo@unipa.it}
\author[UEP-Pisa]{Giovanni Manzini}\ead{giovanni.manzini@uniupo.it}
\author[Palermo]{Antonio Restivo}\ead{antonio.restivo@unipa.it}
\author[Pisa]{Giovanna Rosone}\ead{giovanna.rosone@unipi.it}
\author[Palermo]{Marinella Sciortino}\ead{marinella.sciortino@unipa.it}
\address[Palermo]{University of Palermo, Italy}
\address[UEP-Pisa]{University of Eastern Piedmont and IIT-CNR Pisa, Italy}
\address[Pisa]{University of Pisa, Italy}

\begin{abstract}

The Burrows-Wheeler Transform (BWT) is a word transformation introduced in 1994 for Data Compression. It has become a fundamental tool for designing self-indexing data structures, 
with important applications in several area in science and engineering.
The Alternating Burrows-Wheeler Transform (ABWT) is another transformation recently introduced in [Gessel et al. 2012]
and studied in the field of Combinatorics on Words. It is analogous to the BWT, except that it uses an alternating lexicographical order instead of the usual one. 
Building on results in [Giancarlo et al. 2018], where we have shown that BWT and ABWT are part of a larger class of reversible transformations, here we provide a combinatorial and algorithmic study of the novel transform ABWT. We establish a deep analogy between BWT and ABWT by proving they are the only ones in the above mentioned class to be rank-invertible, a novel notion guaranteeing  efficient invertibility. In addition, we show that the backward-search procedure can be efficiently generalized to the ABWT; this result implies that also the ABWT can be used as a basis for efficient compressed full text indices. 
Finally, we prove that the ABWT can be efficiently computed by using a combination of the Difference Cover suffix sorting algorithm  [K\"{a}rkk\"{a}inen et al., 2006] with a linear time algorithm for finding the minimal cyclic rotation of a word with respect to the alternating lexicographical order.

\ignore{Finally, we prove that the ABWT can be efficiently computed by using a variant of the Difference Cover suffix sorting algorithm  [K\"{a}rkk\"{a}inen et al., 2006].  Further improvements are obtained via the use of the combinatorial notion of Galois word, which is the minimal cyclic rotation of a word, with respect to the alternating lexicographical order. A linear algorithm to find the Galois rotation of a word is introduced.}

\end{abstract}

\begin{keyword}
Alternating Burrows-Wheeler Transform
\sep
Rank-invertibility
\sep
Difference cover algorithm
\sep
Galois word




\end{keyword}

\end{frontmatter}

\section{Introduction}


Michael Burrows and David Wheeler introduced in 1994 a reversible word transformation~\cite{bwt94}, denoted by $\BWT$, that turned out to have \vir{myriad virtues}.  At the time of its introduction in the field of text compression, the Burrows-Wheeler Transform was perceived as a {magic box}: when used as a preprocessing step it would bring rather weak compressors to be competitive in terms of compression ratio with the best ones available~\cite{cj/fenwick96}. 
In the years that followed, many studies have shown the effectiveness of $\BWT$ and its central role in the field of Data Compression due to the fact that it can be seen as a \vir{booster} of the performance of memoryless compressors~\cite{FGMS2005,GIA07,Manzini2001}. Moreover, it was shown in~\cite{Ferragina:2000} that the $\BWT$ can be used to efficiently search for occurrences of patterns inside the original text. Such capabilities of the $\BWT$ have originated the field of Compressed Full-text Self-indices~\cite{books/MBCT2015,books/daglib/0038982}.
The remarkable properties of the $\BWT$ have aroused great interest both from the theoretical and applicative points of view \cite{MantaciRRS07,MantaciRRS08,MantaciRS08,YangZhangWang2010,LiDurbin10,bioinformatics/CoxBJR12,KKbioinf15,RosoneSciortino_CiE2013,tcs/GagieMS17,PPRS2019}.

\ignore{More in detail, the $\BWT$ is defined via a sorting in lexicographic order of all the cyclic rotations of the input word.  The $\BWT$ can be computed in linear time, it produces strings which are provably compressible in terms of the high order entropy of the input, and it can be inverted in linear time by just counting operations (such a property will be formalized in the follows as \emph{rank-invertibility}). Despite its simplicity, the $\BWT$ presents some combinatorial properties that}

In the context of Combinatorics on Words, many studies have addressed the characterization of the words that become the most compressible after the application of the $\BWT$ \cite{MaReSc,puglisiSimpson,PakRedlich2008,RestivoRosoneTCS2009,RestivoRosoneTCS2011,Ferenczi_Zamboni2012arXiv}. Recent studies have focused on measuring the \vir{clustering effect} of $\BWT$, which is a property related to its boosting role as preprocessing of a text compressor \cite{MantaciRRSV17,MantaciRRS17}.

In~\cite{CDP2005}, the authors characterize the $\BWT$ as the inverse of a known bijection between words and multisets of primitive necklaces~\cite{GeRe}.
From this result, in~\cite{GesselRestivoReutenauer2012} the authors introduce and study the basic properties of the {\em Alternating $\BWT$}, $\ABWT$ from now on. It is a transformation on words analogous to the $\BWT$ but the cyclic rotations of the input word are sorted by using the {\em alternating} lexicographic order instead of the usual lexicographic order. The alternating lexicographic order is defined for infinite words as follows: the first letters are compared with the given alphabetic order, in case of equality the second letters are compared with the opposite order, and so on alternating the two orders for even/odd positions. 

In this paper we show that the $\ABWT$ satisfies most of the properties that make the $\BWT$ such a useful transformation. Not only the $\ABWT$ can be computed and inverted in linear time, but also the \emph{backward-search} procedure, which is the basis for indexed exact string matching on the $\BWT$, can be efficiently generalized to the $\ABWT$. This implies that the $\ABWT$ can be used to build an efficient compressed full text index for the transformed string, similarly to the $\BWT$. Note that the variants of the original $\BWT$ which have been introduced so far in the literature~\cite{Schindler1997,dcc/ChapinT98,DAYKIN2017}, were either simple modifications that do not bring new theoretical insight or they were significantly different but without the remarkable compression and search properties of the original~\BWT (see~\cite[Section~2.1]{GMRS_CPM2019arxiv} for a more detailed discussion of these variants).


The existence of the $\ABWT$ shows that the classical lexicographic order is not the only order relation that one can use to obtain a reversible transformation. Indeed, lexicographic and alternating lexicographic order are two particular cases of a more general class of order relations considered in \cite{DOLCE_Reut_Rest2018,Reutenauer2006}.  In a preliminary version of this paper~\cite{GMRRS_DLT2018} we introduce therefore a class of reversible word transformations based on the above order relations that includes both the original $\BWT$ and the Alternating $\BWT$.
Within this class, we introduce the notion of \emph{rank-invertibility}, a property that guarantees that the transformation can be efficiently inverted using rank operations, and we prove that $\BWT$ and $\ABWT$ are the only transformations within this class that are rank-invertible.

We consider also the problem of efficiently computing the $\ABWT$. We first show how to generalize to the alternating lexicographic order the Difference Cover technique introduced in~\cite{Karkkainen:2006}. 
This result leads to the design of time optimal and space efficient algorithms for the construction of the $\ABWT$ in different models of computation when the input string ends with a unique end-of-string symbol. Finally, we explore some combinatorial properties of the \emph{Galois words}, which are minimal cyclic rotations within a conjugacy class, with respect to the alternating lexicographical order. We provide a linear time and space algorithm to find the Galois rotation of a given word and we show that, combining this algorithm with the Difference Cover technique, the $\ABWT$ can be computed in linear time even when the input string does not end with a unique end-of-string symbol.


Motivated by the discovering of the $\ABWT$, in \cite{GMRS_CPM2019arxiv} the authors explore a class of string transformations that includes the one considered in this paper. In this larger class, the cyclic rotations of the input string are sorted using an alphabet ordering that depends on the longest common prefix of the rotations being compared. Somewhat surprisingly some of the transformations in this class do have the same properties of the \BWT and \ABWT, thus showing that our understanding of these transformations is still incomplete. 

\section{Preliminaries}
\label{sec:prel}

Let $\Sigma =\{c_0, c_1, \ldots, c_{\sigma-1}\}$ be an ordered constant size alphabet with $c_0< c_1< \ldots < c_{\sigma-1}$, where $<$ denotes the standard lexicographic order. We denote by $\Sigma^*$ the set of words over $\Sigma$.
Let $w = w_0w_1 \cdots w_{n-1} \in \Sigma^*$ be a finite word, we denote by $|w|$ its length $n$.
We use $\epsilon$ to denote the empty word.
We denote by $|w|_c$ the number of occurrences of a letter $c$ in $w$. The Parikh vector $P_w$ of a word $w$ is a $\sigma$-length array of integers such that for each $c\in \Sigma$, $P_w[c]=|w|_c$. Given a word $x$ and $c\in\Sigma$, we write $\rank_c(x,i)$ to denote the number of occurrences of $c$ in $x[0,i]$.

Given a finite word $w$, a \emph{factor} of $w$ is written as $w[i,j] = w_i \cdots w_j$, with $0\leq i \leq j \leq n-1$. 
A factor of type $w[0,j]$ is called a \emph{prefix}, while a factor of type $w[i,n-1]$ is called a \emph{suffix}. The longest proper factor of $w$ that is both prefix and suffix is called \emph{border}. The $i$-th symbol in $w$ is denoted by $w[i]$. Two words $x,y\in \Sigma^*$ are {\em conjugate}, if $x=uv$ and $y=vu$, where $u,v\in \Sigma^*$. We also say that $x$ is a {\em cyclic rotation} of $y$. A word $x$ is {\em primitive} if all its cyclic rotations are distinct. A primitive word is a \emph{Lyndon word} if it is smaller than all of its conjugates. Conjugacy between words is an equivalence relation over $\Sigma^*$. A word $z$ is called a {\em circular factor} of $x$ if it is a factor of some conjugate of $x$. 

Given two words of the same length $x=x_0x_1\ldots x_{s-1}$ and $y=y_0y_1\ldots y_{s-1}$, we write $x \plex y$ if and only if  $x=y$ or $x_i<y_i$, where $i$ is the smallest index in which the corresponding characters of the two words differ. Analogously, and with the same notation as before, we write $x \palt y$ if and only if $x=y$ or (a) $i$  is even and $x_i<y_i$ or (b) $i$ is odd and $x_i>y_i$. Notice that $\plex$ is the standard lexicographic order relation  on words while $\palt$ is the \emph{alternating} lexicographic order relation. Such orders are used in Section \ref{sec:BWT_ABWT} to define two different transformations on words.

The {\em run-length encoding} of a word $w$, denoted by $\rle(w)$, is a sequence of pairs
$(w_i, l_i$) such that $w_iw_{i+1}\cdots w_{i+l_i-1}$ is a maximal run of a letter $w_i$ (i.e., $w_i=w_{i+1}=\cdots =w_{i+l_i-1}$, $w_{i-1} \neq w_i$
and $w_{i+l_i} \neq w_i$), and all such maximal runs are listed in $\rle(w)$ in the order they
appear in $w$. We denote by $\rho(w)=|\rle(w)|$ i.e., is the number of pairs in $w$, or equivalently the number of equal-letter runs in $w$.
Moreover we denote by $\rho(w)_{c_i}$ the number of pairs $(w_j, l_j)$ in $\rle(w)$ where $w_j=c_i$.
Notice that $\rho(w) \leq \rho(w_1) + \rho(w_2) +\cdots + \rho(w_p)$, where $w_1w_2 \cdots w_p = w$ is any partition of $w$.

The zero-th order empirical entropy of the word $w$ is defined as
$$
H_0(w) = - \sum_{i=0}^{\sigma-1}\frac{|w|_{c_i}}{|w|}\log \frac{|w|_{c_i}}{|w|}
$$
(all logarithms are taken to the base $2$ and we assume $0\log 0 = 0$). The value $|w|H_0(w)$ is the output size of an ideal compressor that uses $-\log ({|w|_{c_i}}/{|w|})$ bits to encode each occurrence of symbol $c_i$. This is the minimum size we can achieve using a uniquely decodable code in which a fixed codeword is assigned to each symbol.

For any length-$k$ factor $x$ of $w$, we denote by $x_w$ the sequence of characters preceding
the occurrences of $x$ in $w$, taken from left to right. If $x$ is not a factor of $w$ the word $x_w$ is empty. 
The $k$-th order empirical entropy of $w$ is defined as
$$
H_k(w) = \frac{1}{|w|}\sum_{x\in\Sigma^k} |x_w|H_0(x_w).
$$
The value $|w|H_k(w)$ is a lower bound to the output size of any compressor that encodes each symbol with a code that only depends on the symbol itself and on the $k$ preceding symbols. Since the use of a longer context helps compression, it is not surprising that for any $k\geq 0$ it is $H_{k+1}(w) \leq H_{k}(w)$.

\section{BWT and Alternating BWT}\label{sec:BWT_ABWT}

In this section we describe two different invertible transformations on words based on the lexicographic and alternating lexicographic order, respectively. 
Given a primitive word $w$ of length $n$ in  $\Sigma^*$, \emph{the Burrows-Wheeler transform}, denoted by $\BWT$ \cite{bwt94} 
and \emph{the Alternating Burrows-Wheeler transform}, denoted by $\ABWT$ \cite{GesselRestivoReutenauer2012}
for $w$ are defined constructively as follows:
\begin{enumerate}
  \item \label{defStepA} Create the matrix $M(w)$ of the cyclic rotations of $w$;
  \item \label{defStepB} Create the matrix 
  \begin{enumerate} 
  \item for $\BWT$, $M_{\lex}(w)$ by sorting the rows of $M(w)$ according to $\plex$;
  \item for $\ABWT$, $\Msort{\alt}$ by sorting the rows of $M(w)$ according to $\palt$;
  \end{enumerate}
  \item \label{defStepC} Return as output the pair 
  \begin{enumerate} 
  \item for $\BWT$, $(\bwt(w), I)$, where $\bwt(w)$ is the last column $L$ in the matrix $\Msort{\lex}$ 
  \item for $\ABWT$, $(\abwt(w), I)$ where $\abwt(w)$ is the last column $L$ in the matrix $\Msort{\alt}$
  \end{enumerate}
  and, in both case, the integer $I$ giving the position of $w$ in that matrix. 
\end{enumerate}

An example of the above process, together with the corresponding output, is provided in Fig. \ref{fig:bwt}.

\begin{figure}[ht]
{\small
$$\arraycolsep=2.5pt
\begin{array}{cccccc}
            &            &           &            &            &            \\          
             &            &            &            &            &            \\
          a &          b &          r &          a &          c &          a \\
          b &          r &          a &          c &          a &          a \\
          r &          a &          c &          a &          a &          b \\
          a &          c &          a &          a &          b &          r \\
          c &          a &          a &          b &          r &          a \\
          a &          a &          b &          r &          a &          c \\
            &            &            &            &            &            \\
           \multicolumn{6}{c}{M(w)}            \\ 
\end{array}
\qquad\qquad
\begin{array}{ccccccccc}
    &                  & F &  &   &   &   &  L  \\
    &                  &  \downarrow &&  &  & &\downarrow  \\
    &                  & a & a & b & r & a & c    \\
    &                  & a & b & r & a & c & a    \\
  I & \rightarrow\;      & a & c & a & a & b & r    \\
    &                  & b & r & a & c & a & a    \\
    &                  & c & a & a & b & r & a    \\
    &                  & r & a & c & a & a & b  \\
    &                  &   &   &    &   &  &     \\
    &                  & \multicolumn{6}{c}{\Msort{\lex}}
\end{array}
\qquad\qquad
\begin{array}{ccccccccc}
    &                 & F &   &   &   &  & L  \\
    &                 & \downarrow &   &   &   &  & \downarrow  \\
  I & \rightarrow\;     & a & c & a & a & b & r    \\
    &                 & a & b & r & a & c & a    \\
    &                 & a & a & b & r & a & c    \\
    &                 & b & r & a & c & a & a    \\
    &                 & c & a & a & b & r & a    \\
    &                 & r & a & c & a & a & b  \\
    &                 &   &   &    &   &   &   \\
    &                 & \multicolumn{6}{c}{\Msort{\alt}}
\end{array}
$$
}
\caption{Left: the matrix $M(w)$ of all cyclic rotations of the word $w = acaabr$. Center:  the matrix $\Msort{\lex}$; the pair $(caraab,2)$ is the output $\bwt(w)$. Right: the matrix $\Msort{\alt}$; the pair $(racaab,0)$ is the output of $\ABWT(w)$.} \label{fig:bwt}
\end{figure}

\begin{remark}\label{rem:conjugacy}
If two words are conjugate the $\BWT$ (resp. $\ABWT$) will have the same column $L$ and differ only in $I$, whose purpose is only to distinguish between the different members of the conjugacy class. However, $I$ is not necessary in order to recover the matrix $M$ from the last column $L$.
\end{remark}

The following proposition, proved in \cite{GMRRS_DLT2018}, states that three well known properties of the $\BWT$ hold, in a slightly modified form, for the $\ABWT$ as well.  
Here we report the proof for the sake of completeness.

\begin{proposition}\label{p-fondBWT}
Let $w$ be a word and let $(L,I)$ be the output of $\BWT$ or $\ABWT$ applied to $w$. The following properties hold:
\begin{enumerate}
\item\label{proper0} Let $F$ denote the first column of $\Msort{\lex}$ (resp. $\Msort{\alt})$, then $F$ is obtained by lexicographically sorting the symbols of $L$.
\item\label{proper1} For every  $i$, $0 \leq i < n$, $L[i]$ circularly precedes $F[i]$ in the original word, for both $\BWT$ and $\ABWT$.
\item\label{proper2} For each symbol $a$, and $1 \leq j \leq |w|_a$, the $j$-th occurrence of $a$ in $F$ corresponds
\begin{enumerate}
\item\label{proper2_lex} for $\BWT$, to its $j$-th occurrence  in $L$
\item\label{proper2_alt} for $\ABWT$, to its $(|w|_a-j+1)$-th occurrence in $L$.
\end{enumerate}
\end{enumerate}
\end{proposition}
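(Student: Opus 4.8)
The plan is to carry out all three arguments directly on the rows of the sorted matrix, writing a generic row $p$ of $\Msort{\lex}$ (resp. $\Msort{\alt}$) as the cyclic rotation $c_0c_1\cdots c_{n-1}$ of $w$, so that $F[p]=c_0$ and $L[p]=c_{n-1}$. Two preliminary observations will be used throughout: (i) every column of $M(w)$ — hence both $F$ and $L$ — is a permutation of the multiset of letters of $w$, since the $k$-th column lists $w[j+k \bmod n]$ for $j=0,\dots,n-1$; and (ii) both $\plex$ and $\palt$ compare two distinct words by first looking at their initial letters with the \emph{standard} alphabet order.

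For the first property I would note that by (ii) the column $F$ of either sorted matrix is non-decreasing, and by (i) it has the same Parikh vector as $L$; since the non-decreasing rearrangement of a multiset is unique, $F$ is precisely the lexicographically sorted sequence of the letters of $L$, which settles the claim for both transforms. The second property needs no property of the order at all: the $i$-th row is some rotation $w[t]w[t+1]\cdots w[t-1]$ (indices modulo $n$), whence $F[i]=w[t]$, $L[i]=w[t-1]$, and $w[t-1]$ circularly precedes $w[t]$ in $w$; this holds verbatim for $\BWT$ and $\ABWT$.

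For the third property I would introduce the bijection $\phi$ of the set of rows sending $p=c_0c_1\cdots c_{n-1}$ to the unique row equal to $c_{n-1}c_0c_1\cdots c_{n-2}$; it is well defined because $w$ primitive means all rotations are distinct, and for each letter $a$ it restricts to a bijection from the rows with $L[\cdot]=a$ onto the rows with $F[\cdot]=a$. The crux is to decide whether $\phi$ preserves or reverses the row order inside such a block. Given rows $p=c_0\cdots c_{n-1}$ and $q=d_0\cdots d_{n-1}$ with $c_{n-1}=d_{n-1}=a$, distinctness forces their first mismatch at some position $i\le n-2$; then $\phi(p)$ and $\phi(q)$ agree in position $0$ (both $a$) and have their first mismatch at position $i+1$, involving the very same pair of symbols. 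Under $\plex$ the outcome of a comparison is independent of the position, so $p\prec_{\lex}q\iff\phi(p)\prec_{\lex}\phi(q)$, $\phi$ is order preserving, and the $j$-th row with $L=a$ maps to the $j$-th row with $F=a$ — exactly statement 3(a). Under $\palt$ moving the mismatch from $i$ to $i+1$ toggles its parity and hence flips the direction of the comparison, so $p\lalt q\iff\phi(q)\lalt\phi(p)$, $\phi$ reverses the order on the block, and the $j$-th row with $L=a$ maps to the $(|w|_a-j+1)$-th row with $F=a$ — exactly statement 3(b).

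I expect the only genuinely delicate point to be the parity bookkeeping in the last step: one must check carefully that prepending a common letter shifts every mismatch position by exactly one, thereby switching even and odd, and that for $\palt$ switching the parity of the \emph{first} mismatch exactly reverses the result of the comparison; once that is nailed down, the order-reversing behaviour of $\phi$, and with it the asymmetry between 3(a) and 3(b), follows immediately. Everything else is routine manipulation of column permutations and of the behaviour of the two order relations on first letters.
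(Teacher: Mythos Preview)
Your proof is correct and follows essentially the same route as the paper's. The paper disposes of Properties~\ref{proper0}, \ref{proper1} and~\ref{proper2_lex} by citing~\cite{bwt94} and calling the \ABWT analogues ``straightforward'', whereas you spell them out explicitly; for Property~\ref{proper2_alt} the paper argues directly that $au\palt av$ implies $va\palt ua$ via the same parity-shift observation you use (prepending a common letter moves the first mismatch from position~$i$ to~$i+1$), so your bijection~$\phi$ is just the paper's ``cyclic rotation by one'' viewed from the $L$-side rather than the $F$-side.
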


\begin{proof}
Properties \ref{proper0}, \ref{proper1} and \ref{proper2_lex} for the $\BWT$ have been established in~\cite{bwt94}. Properties \ref{proper0} and \ref{proper1} for the $\ABWT$ are straightforward. To prove property \ref{proper2_alt}, consider two rows $i$ and $j$ in $\Msort{\alt}$ with $i < j$ starting with the symbol $a$. Let $w_i$ and $w_j$ be the two conjugates of $w$ in rows $i$ and $j$ of $\Msort{\alt}$. By construction we have $w_i = a u$,  $w_j = a v$ and $w_i \palt w_j$. To establish Property~\ref{proper2_alt}, we need to show that row $w_j$ cyclically rotated precedes in the $\palt$ order row $w_i$ cyclically rotated. 
In other words, we need to show that
$$
au \palt av \;\Longrightarrow\; va \palt ua.
$$
To prove the above implication, we notice that if the first position in which $au$ and $av$ differ is odd (resp. even) then the first position in which $va$ and $ua$ differ will be in an even (resp. odd) position. The thesis follow by the alternate use of the standard and reverse order in $\palt$ (see \cite{GesselRestivoReutenauer2012} for a different proof of the same property).\qed
\end{proof}

It is well known that in the $\BWT$ the occurrences of the same symbol appear in columns $F$ and $L$ in the \textbf{same} relative order; according to Property~\ref{proper2_alt}.
In the $\ABWT$, the occurrences in $L$ appear in the \textbf{reverse} order than in $F$. For example, in Fig.~\ref{fig:bwt} (right) we see that the $a$'s of $acaabr$ in the columns $F$ appear in the order 1st, 3rd, and 2nd, while in column $L$ they are in the reverse order: 2nd, 3rd, and 1st. 

Proposition \ref{p-fondBWT} is the key motivations to efficiently recover the original string from the output of $\BWT$ or $\ABWT$, as we will see in Section \ref{sec:uniq}.

Note that, although $\BWT$ and $\ABWT$ are very similarly defined, they are very different combinatorial tools. Combinatorial aspects that distinguish $\ABWT$ and $\BWT$ can be found  in~\cite{GesselRestivoReutenauer2012,GMRRS_DLT2018}, which makes it interesting to study $ABWT$ in terms of tool characterizing families of words. 

However, in \cite{GMRRS_DLT2018} we experimentally tested $\ABWT$ as pre-processing of a compression tool, by comparing its performance with a $BWT$-based compressor. We have shown that the behaviour of the two transformations is essentially equivalent in terms of compression. Actually, such experiments confirm a theoretical result we proved in \cite{GMRRS_DLT2018} for a larger class of transformations that can be seen as a generalization of the $\BWT$ and that includes the $\ABWT$ as a special case. In next section, we give a brief description of the properties we proved in \cite{GMRRS_DLT2018} for such a class of transformations, all of which also hold for the $\ABWT$. 

\ignore{For instance, $\BWT$ allows to characterize a family of words very well known in the field of Combinatorics in Words, Standard Sturmian words \cite{Loth2}. These words have several characterizations as, for instance, a special decomposition into palindrome words and an extremal property on the periods of the word that is closely related to Fine and Wilf's theorem \cite{deLuca1997,deLucaMignosi1994}. Moreover, they also appear as extremal case in the Knuth-Morris-Pratt pattern matching algorithm (see \cite{KnuthMorrisPratt1977}). It has been proved \cite{MaReSc} that, for binary alphabets, standard Sturmian words represent the extremal case of $\BWT$ in the sense that the transformation produces a total clustering of all the instances of any character. Thus, in terms of number of runs, $\rho(\bwt(w))=2$ if and only if $w$ is a conjugate of standard Sturmian words. 
The same property does not hold for the $\ABWT$. For example, for $w=abaababa$,  it is $\bwt(w)=bbbaaaaa$ and $\abwt(w)=ababbaaa$. More in general, one can prove that for every not unary word $w$ having length greater that 2, it is $\rho(\abwt(w))>2$. 


Other combinatorial aspects that distinguish $\ABWT$ and $\BWT$ have been studied in~\cite{GesselRestivoReutenauer2012}. }

\section{Generalized BWTs: a synopsis} \label{sec:abwt}

In this section we describe the class of Generalized BWTs, introduced in \cite{GMRRS_DLT2018}, by reporting their main properties.

Given the alphabet $\Sigma$ of size $\sigma$, in the following, we denote by $\Pisig$ the set of $\sigma!$ permutations of the alphabet symbols. 
Two important permutations are distinguished in $\Pisig$:
the identity permutation $\pid$ corresponding to the lexicographic order, and the reverse permutation $\pre$ corresponding to the reverse lexicographic order. We consider generalized lexicographic orders introduced in \cite{Reutenauer2006} (cf. also \cite{DOLCE_Reut_Rest2018}) that, for the purposes of this paper, can be formalized as follows.

\begin{definition}\label{def_perm}
Given a $k$-tuple $K = (\pi_0,\pi_1,\ldots,\pi_{k-1})$ of elements of $\Pisig$, we denote by $\preceq_K$ the lexicographic order such that given two words of the same length $x =x_0x_1\cdots x_{s-1}$ and $y = y_0 y_1 \cdots y_{s-1}$ it is $x \preceq_K y$ if and only if $x=y$ or $x_i <_i y_i$ where $i$ is the smallest index such that $x_i \neq y_i$, and  $<_i$ is the lexicographic order induced by the permutation $\pi_{i\bmod k}$. Without loss of generality, we can assume $\pi_0 = \pid$.
\end{definition}

Using the above definition, a class of generalized $\BWT$s can be defined as follows:

\begin{definition}
Given a $k$-tuple $K =(\pid,\pi_1,\ldots,\pi_{k-1})$ of elements of $\Pisig$, we denote by $\BWT_K$ the transformation mapping a primitive word $w$ to the last column $L$ of the matrix $M_K(w)$ containing the cyclic rotations of $w$ sorted according to the lexicographic order $\preceq_K$.
The output of $\BWT_K$ applied to $w$ is the pair $(\bwt_K(w),I)$, where $\bwt_K(w)$ is the last column $L$ of the matrix and $I$ is the row of $M_K(w)$ containing the word $w$.
\end{definition}

Note that for $K=(\pid)$, $\BWT_K$ is the usual $\BWT$, while for $K =(\pid,\pre)$, $\BWT_K$ coincides with the $\ABWT$ defined in Section~\ref{sec:BWT_ABWT}.

\begin{remark}\label{rem:time}
For most applications, it is assumed that the last symbol of $w$ is a unique end-of-string marker smaller than each symbol of the alphabet $\Sigma$. Under this assumption, lexicographically sorting $w$'s cyclic rotations
is equivalent to building the suffix tree~\cite{Lothaire:2005,Gusfield1997} for $w$, which can be done in linear time.  In this setting, we can compute $\bwt_K(w)$ in linear time: we do a depth-first visit of the suffix tree in which the children of each node are visited in the order induced by $K$. In other words, the children of each node $v$ are visited according to the order $\pi_{|v| \bmod k}$ where $|v|$ is the string-depth\footnote{The number of letters in the word obtained by concatenating the labels of the edges in the path from the root of the suffix tree to the node $v$} of node~$v$. Since the suffix tree has $O(|w|)$ nodes, for a constant alphabet the whole procedure takes linear time.
\end{remark}

The next result proved in \cite{GMRRS_DLT2018} guarantees that the transformations $\BWT_K$ are invertible. As we specify in next section, the inversion procedure for $ABWT$ is more efficient.


\begin{theorem}
For every $k$-tuple $K=(\pid,\pi_1,\ldots,\pi_{k-1})$ the transformation $\BWT_K$ is invertible in $O(n^3)$ time, where $n=|w|$.\qed
\end{theorem}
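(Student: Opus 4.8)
The plan is to mimic the classical $\BWT$ inversion argument, but replace the single LF-mapping with a position-dependent map that accounts for the permutation $\pi_{i\bmod k}$ used at depth $i$. The starting point is Proposition~\ref{p-fondBWT}: from $L = \bwt_K(w)$ alone we can reconstruct the first column $F$ (it is the sorted rearrangement of $L$), and we know that $L[i]$ circularly precedes $F[i]$ in $w$. The key quantitative ingredient is an analogue of Property~\ref{proper2}: once we fix a symbol $a$ and look at the rows of $M_K(w)$ beginning with $a$, their relative order is governed by $\preceq_{K'}$ where $K' = (\pi_1,\pi_2,\ldots,\pi_{k-1},\pi_0)$ is the cyclic shift of $K$ — because deleting the leading $a$ shifts every comparison position down by one. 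Hence the correspondence between the $j$-th $a$ in $F$ and its image in $L$ is a fixed permutation of $\{1,\ldots,|w|_a\}$ determined by $K$; for $K=(\pid)$ it is the identity, for $K=(\pid,\pre)$ it is the reversal, matching Proposition~\ref{p-fondBWT}.

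First I would make this precise: define, for each row $r$ of $M_K(w)$, the successor row $\phi(r)$ to be the row obtained by rotating the conjugate in row $r$ one position to the left. Using $F$, $L$, and the symbol-correspondence just described, show that $\phi$ can be computed from $(L,I)$ alone in polynomial time: to find $\phi(r)$, read $c = L[r]$, count how many occurrences of $c$ precede position $r$ in $L$ (a $\rank$ operation), convert that rank to the corresponding rank in $F$ via the fixed per-symbol permutation, and take the $\sel$-style lookup in $F$. Each such step costs $O(n)$ naively, so computing the whole permutation $\phi$ costs $O(n^2)$.

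Next I would reconstruct $w$ by iterating $\phi$. Starting from row $I$ (which contains $w$ itself), the sequence $I, \phi(I), \phi^2(I), \ldots, \phi^{n-1}(I)$ visits $n$ distinct rows (distinctness follows from primitivity of $w$, exactly as in the classical case), and reading off $F$ at these rows — or equivalently $L$ shifted by one — spells out $w$. This loop performs $n$ evaluations of $\phi$. If one precomputes the full table of $\phi$ in $O(n^2)$ time and then runs the loop in $O(n)$, the total is $O(n^2)$; the $O(n^3)$ bound in the statement is the pessimistic estimate in which each of the $n$ iterations recomputes a $\rank$/$\sel$ query in $O(n)$ time without any preprocessing, plus the $O(n\log n)$ or $O(n^2)$ cost of sorting $L$ to get $F$. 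So the bound is comfortably met; the theorem only asserts $O(n^3)$.

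The main obstacle is establishing the per-symbol correspondence in the generalized setting — i.e., proving the "cyclic shift of $K$" lemma and deriving from it the explicit fixed permutation relating occurrences of a symbol in $F$ to those in $L$. For $\BWT$ this is the trivial observation that order is preserved; for $\ABWT$ it is the implication $au \palt av \Rightarrow va \palt ua$ proved in Proposition~\ref{p-fondBWT}; for general $K$ one must argue that prepending a common symbol $a$ shifts the first-differing position by one, hence cycles which permutation $\pi_{i\bmod k}$ adjudicates that position, and that this is a bijective and order-respecting (up to the known permutation) transformation on the rows sharing that leading symbol. Once that structural fact is in hand, the inversion algorithm and its polynomial running-time analysis are routine, and the stated $O(n^3)$ bound follows immediately.
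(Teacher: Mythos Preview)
Your proposal has a genuine gap: the central structural claim --- that the correspondence between the $j$-th occurrence of a symbol $a$ in $F$ and its image in $L$ is a \emph{fixed permutation of $\{1,\ldots,|w|_a\}$ determined by $K$} --- is false for general $K$. Your cyclic-shift observation is correct: writing the rows of $M_K(w)$ that begin with $a$ as $au_1,\ldots,au_m$, they appear in $F$ in the $\preceq_{K'}$-order of the $u_i$ (with $K'$ the left cyclic shift of $K$), while the same rotations shifted left, namely $u_1a,\ldots,u_ma$, determine the occurrences of $a$ in $L$ in the $\preceq_K$-order of the $u_i$. But the bijection sending the $\preceq_K$-rank of each $u_i$ to its $\preceq_{K'}$-rank depends on the positions at which the $u_i$ first differ pairwise, hence on the word $w$ itself and not merely on $K$ and $|w|_a$. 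Lemma~\ref{lemma:t1t2} constructs an explicit counterexample: two words with identical Parikh vector for which this $a$-correspondence is the identity on one word and a transposition on the other. Theorem~\ref{theo:mainRI} then shows that the LF-map is computable from $(P_L,\,L[i],\,\rank_{L[i]}(L,i))$ alone \emph{only} when $\BWT_K$ is the $\BWT$ or the $\ABWT$; for every other $K$ your rank-plus-fixed-permutation scheme provably cannot recover~$\phi$.

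The paper's argument bypasses the LF-map entirely. It rebuilds $M_K(w)$ column by column: from the $\preceq_K$-sorted list of the length-$j$ circular factors of $w$, prepend $L[i]$ to the $i$-th entry to obtain all length-$(j{+}1)$ circular factors, then re-sort under $\preceq_K$. After $n$ rounds the full matrix is recovered and row $I$ is $w$; step $j$ sorts $n$ strings of length $j{+}1$ in $O(jn)$ time by counting sort, giving the $O(n^3)$ total. The crucial difference is that no closed-form LF-map is ever needed: the explicit re-sort at every step absorbs whatever word-dependent reordering occurs, at the price of the extra factor of $n$ in running time.
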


\ignore{
\begin{proof}
Let $w$ be a primitive word of length $n$ and let $\bwt_K(w)=(L,I)$ be the output of $\BWT_K$ applied to $w$.  We first compute $\Msort{K}$ and then we obtain $w$ from it.
By definition, $L$ is the last column of the matrix $\Msort{K}$. 
Assume that $K_0 =(\pid), K_1 =(\pid,\pi_1), \ldots, K_{k-1} =(\pid,\pi_1,\ldots,\pi_{k-1})$.
The first column $F$ of the matrix $\Msort{K}$ can be recovered from $L$ by sorting it according to $\preceq_{K_0}$ and, for each  $0 \leq i \leq n-1$, we know that $L[i]$ circularly precedes $F[i]$ in $w$. 
As in the construction of the matrix $\Msort{lex}$ for the usual $\BWT$, at each step $j$ we can build the list $L_{K_j}$ of circular factors of $w$ of length $j+1$ sorted by using the $(j+1)$-tuple $K_{j}$.
The sorted list $L_{K_0}$ is equal to $F$, so we concatenate, for each $0 \leq i \leq n-1$, $L[i]$ and $L_{K_0}[i]$ and obtain all pairs of consecutive symbols $L[i]F[i]$ in $w$.
Now, by sorting this list of pairs using $\preceq_{K_1}$ we obtain the sorted list $L_{K_1}$ of all circular factors of length $2$, i.e. the first two columns of $\Msort{K}$. 
In the same way, we concatenate $L$ to each element of $L_{K_1}$ and sort the new list using $\preceq_{K_2}$ obtaining the sorted list $L_{K_2}$ of the circular factors of length $3$, i.e. the first three columns of $\Msort{K}$.
In general, for each $1 \leq j \leq n-1$, we concatenate each symbol $L[i]$ in the last column to each circular factor $L_{K_{j-1}}[i]$ of length $j$, i.e. we obtain the circular factor $L[i]L_{K_{j-1}}[i]$ of length $j+1$. Then, we sort this list by using $\preceq_{K_{j}}$ and obtain the new list $L_{K_{j}}$ of all circular factors of length~$j$ that constitute the first $j$ columns of $\Msort{K}$. 
When $j = n-1$, the sorted list $L_{K_{n-1}}$ contains the circular factors of length $n$ that are exactly all the cyclic rotations of $w$ in $\Msort{K}$. By construction, the input word $w$ is the row at the position $I$ of $\Msort{K}$.
The space and time complexities follow from the observation that at each step, for $0\leq j\leq n-1$ the list of $n$ words of length $j+1$ is sorted in $O(jn)$ time by iterating a variant of counting sort. 
\qed
\end{proof}}

\ignore{
\begin{example}\label{ex-inverse}
Let $K=((a,b), (b,a), (b,a), (a,b), (a,b), (b,a))$, $\bwt_K(w)=L=babaab$ and $I=3$.
Note that $L_{K_0}=F$. 
The steps for constructing $\Msort{K}$ are the following.
{\scriptsize
$$
\begin{array}{c@{}c@{}c@{}c@{}c@{}}
  &          &  &   \multicolumn{2}{c}{L_{K_1}} \\
\multicolumn{1}{c|}{L} & L_{K_0}  &  &   (a,b)         & \textbf{(b,a)}        \\
\multicolumn{1}{c|}{} &          &  &              & \textbf{}          \\
\multicolumn{1}{c|}{b} & a        &  &   a          & \textbf{b}         \\
\multicolumn{1}{c|}{a} & a        & \Rightarrow   & a          & \textbf{b}         \\
\multicolumn{1}{c|}{b} & a        &  &   a          & \textbf{a}         \\
\multicolumn{1}{c|}{a} & b        &  &   b          & \textbf{b}         \\
\multicolumn{1}{c|}{a} & b        &  &   b          & \textbf{a}         \\
\multicolumn{1}{c|}{b} & b        &  &   b          & \textbf{a}        
\end{array}
\Rightarrow    \\
\begin{array}{cc@{}c@{}c@{}c@{}c@{}c}
\textbf{}  &                               &  &    & \multicolumn{3}{c}{L_{K_2}} \\
\multicolumn{1}{c|}{L} & \multicolumn{2}{c}{L_{K_1}}   &  &   (a,b)  & \textbf{(b,a)} & \textbf{(b,a)} \\
\multicolumn{1}{c|}{}  &                               &  &    &     & \textbf{}   & \textbf{}   \\
\multicolumn{1}{c|}{b} & a  & b                        &   &   a   & \textbf{b}  & \textbf{b}  \\
\multicolumn{1}{c|}{a} & a  & b                        & \Rightarrow   & a   & \textbf{b}  & \textbf{a}  \\
\multicolumn{1}{c|}{b} & a  & a                        &  &   a   & \textbf{a}  & \textbf{b}  \\
\multicolumn{1}{c|}{a} & b  & b                        &  &   b   & \textbf{b}  & \textbf{a}  \\
\multicolumn{1}{c|}{a} & b  & a                        &  &   b   & \textbf{a}  & \textbf{b}  \\
\multicolumn{1}{c|}{b} & b  & a                        &  &   b   & \textbf{a}  & \textbf{a} 
\end{array}
  \Rightarrow    \\
\begin{array}{cc@{}c@{}c@{}c@{}c@{}c@{}c@{}c}
  &                              &  &    &  & \multicolumn{4}{c}{L_{K_3}}     \\
\multicolumn{1}{c|}{L} & \multicolumn{3}{c}{L_{K_2}}  &  &   (a,b) & \textbf{(b,a)} & \textbf{(b,a)} & (a,b) \\
\multicolumn{1}{c|}{} &           &          &       &  &      & \textbf{}   & \textbf{}   &    \\
\multicolumn{1}{c|}{b} & a         & b        & b     &  &   a  & \textbf{b}  & \textbf{b}  & a  \\
\multicolumn{1}{c|}{a} & a         & b        & a     &  \Rightarrow &   a  & \textbf{b}  & \textbf{a}  & b  \\
\multicolumn{1}{c|}{b} & a         & a        & b     &  &   a  & \textbf{a}  & \textbf{b}  & a  \\
\multicolumn{1}{c|}{a} & b         & b        & a     &  &   b  & \textbf{b}  & \textbf{a}  & a  \\
\multicolumn{1}{c|}{a} & b         & a        & b     &  &   b  & \textbf{a}  & \textbf{b}  & b  \\
\multicolumn{1}{c|}{b} & b         & a        & a     &  &   b  & \textbf{a}  & \textbf{a}  & b 
\end{array}
  \Rightarrow \\
$$
$$
\begin{array}{cc@{}c@{}c@{}c@{}c@{}c@{}c@{}c@{}c@{}c}
  &       &       &      &      &  & \multicolumn{5}{c}{L_{K_4}}              \\
\multicolumn{1}{c|}{L} & \multicolumn{4}{c}{L_{K_3}} &  & (a,b) & \textbf{(b,a)} & \textbf{(b,a)} & (a,b) & (a,b) \\
  \multicolumn{1}{c|}{}&       &       &      &      &  &    & \textbf{}   & \textbf{}   &    &    \\
\multicolumn{1}{c|}{b} & a     & b     & b    & a    &  & a  & \textbf{b}  & \textbf{b}  & a  & a  \\
\multicolumn{1}{c|}{a} & a     & b     & a    & b    & \Rightarrow & a  & \textbf{b}  & \textbf{a}  & b  & b  \\
\multicolumn{1}{c|}{b} & a     & a     & b    & a    &  & a  & \textbf{a}  & \textbf{b}  & a  & b  \\
\multicolumn{1}{c|}{a} & b     & b     & a    & a    &  & b  & \textbf{b}  & \textbf{a}  & a  & b  \\
\multicolumn{1}{c|}{a} & b     & a     & b    & b    &  & b  & \textbf{a}  & \textbf{b}  & b  & a  \\
\multicolumn{1}{c|}{b} & b     & a     & a    & b    &  & b  & \textbf{a}  & \textbf{a}  & b  & a 
\end{array}
\Rightarrow
\begin{array}{cc@{}c@{}c@{}c@{}c@{}c@{}c@{}c@{}c@{}c@{}c@{}c}
                       &      &     &     &     &     &  & \multicolumn{6}{c}{L_{K_5}=\Msort{K}} \\
\multicolumn{1}{c|}{L} & \multicolumn{5}{c}{L_{K_4}} &  & (a,b) & \textbf{(b,a)} & \textbf{(b,a)} & (a,b) & (a,b) & \textbf{(b,a)} \\
\multicolumn{1}{c|}{}  &      &     &     &     &     &  &    & \textbf{}   & \textbf{}   &    &    & \textbf{}   \\
\multicolumn{1}{c|}{b} & a    & b   & b   & a   & a   &  & a  & \textbf{b}  & \textbf{b}  & a  & a  & \textbf{b}  \\
\multicolumn{1}{c|}{a} & a    & b   & a   & b   & b   & \Rightarrow & a  & \textbf{b}  & \textbf{a}  & b  & b  & \textbf{a}  \\
\multicolumn{1}{c|}{b} & a    & a   & b   & a   & b   &  & a  & \textbf{a}  & \textbf{b}  & a  & b  & \textbf{b}  \\
\multicolumn{1}{c|}{a} & b    & b   & a   & a   & b   &  & b  & \textbf{b}  & \textbf{a}  & a  & b  & \textbf{a}  \\
\multicolumn{1}{c|}{a} & b    & a   & b   & b   & a   &  & b  & \textbf{a}  & \textbf{b}  & b  & a  & \textbf{a}  \\
\multicolumn{1}{c|}{b} & b    & a   & a   & b   & a   &  & b  & \textbf{a}  & \textbf{a}  & b  & a  & \textbf{b} 
\end{array}
$$}

Once we have reconstructed $\Msort{K}$, since $I=3$ we conclude that the original word is $w=bbaaba$.
\end{example}
}

Note that recently~\cite{GMRS_CPM2019arxiv}, the complexity for the inversion of a generic transformation in $BWT_K$ has been improved to $\Oh(n^2)$ time.

The following theorem proved in \cite{GMRRS_DLT2018} shows that each transformation $BWT_K$ produces a number of equal-letter runs that is at most the double of the number of equal-letter runs of the input word. This fact generalizes a result proved for $\BWT$ \cite{MantaciRRSV17}.

\begin{theorem}\label{th-upperb_runs}
Given a $k$-tuple $K =(\pid,\pi_1,\ldots,\pi_{k-1})$ and a word $w$ over a finite alphabet $\Sigma$, then 
$$\rho(bwt_K(w))\leq 2\rho(w).$$\qed
\end{theorem}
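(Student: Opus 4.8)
The plan is to bound the number of equal-letter runs of $\bwt_K(w)$ block by block, after grouping the rows of $M_K(w)$ by their first symbol. First, since $\bwt_K$ depends only on the conjugacy class of $w$ (Remark~\ref{rem:conjugacy}) and $\rho$ is minimised over the class by any representative whose first and last letters differ — one exists unless $w$ is a single letter, a trivial case — I may assume $w_0\neq w_{n-1}$; then for every letter $c$ the linear run count $\rho(w)_c$ equals the number $s_c$ of "circular" runs of $c$, and $\sum_c s_c=\rho(w)$. Write $L=\bwt_K(w)$ and let $F$ be the first column of $M_K(w)$. Because $\pi_0=\pid$, Proposition~\ref{p-fondBWT}(\ref{proper0}) says $F$ lists the letters of $w$ in nondecreasing order, so the rows split into consecutive blocks $B_c$, one per letter $c$ occurring in $w$, where $B_c$ collects the rows whose rotation begins with $c$. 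Restricting $L$ to these blocks gives a factorisation $L=L_{c_0}L_{c_1}\cdots$, and by the subadditivity of $\rho$ under concatenation recalled in Section~\ref{sec:prel} we get $\rho(L)\le\sum_c\rho(L_c)$; hence it suffices to prove $\rho(L_c)\le 2s_c$ for every $c$.

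Next I would describe the structure of a single block. By Proposition~\ref{p-fondBWT}(\ref{proper1}), the entry of $L_c$ on the row of the rotation starting at a position $m$ with $w_m=c$ is the letter $w_{m-1}$ circularly preceding that occurrence of $c$: it equals $c$ when the occurrence is internal to a run of $c$, and is a letter different from $c$ exactly for the $s_c$ occurrences that start a run of $c$. So $L_c$ is a word of length $|w|_c$ with exactly $s_c$ letters different from $c$. Counting maximal runs of such a word, the non-$c$ letters give at most $s_c$ runs and the $c$'s are cut into at most $s_c+1$ blocks, so $\rho(L_c)\le 2s_c+1$, with equality forced only when $L_c$ begins and ends with $c$ and no two of its non-$c$ letters are adjacent, i.e. $L_c$ has the shape $c^{a_0}x_1c^{a_1}\cdots x_{s_c}c^{a_{s_c}}$ with every $a_i\ge 1$. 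Thus the whole theorem reduces to ruling out this extremal shape of $L_c$.

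The remaining step is where the work lies. Translated through $B_c$, the extremal shape says that in the $\preceq_K$-sorted order of $B_c$ the $s_c$ run-start rotations of $c$ (the ones with non-$c$ $L$-value) are pairwise separated, and separated from both ends of $B_c$, by rotations internal to runs of $c$. I would exclude this by a direct analysis of $\preceq_K$ inside one block: a rotation in $B_c$ coming from a run $[p,q]$ of $c$ of length $\ell$ has the form $c^{\ell-k}\,z\,c^{k}$, where $z$ is the circular factor strung between two consecutive occurrences of that run; any two such rotations — in particular a run-start rotation and its internal neighbours inside the same run, which differ only in the length of the leading block of $c$'s — agree on a prefix of $c$'s and are then decided at a single position, under the permutation $\pi_{i\bmod k}$ attached to that position, by comparing $c$ with the first letter $z_0$ of $z$. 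Requiring a run-start rotation to be the $\preceq_K$-smallest (or largest) element of $B_c$, while the run-start rotations of the other runs do not fall outside it, I expect to force mutually contradictory orderings of $c$ and $z_0$ with respect to the various $\pi_{i\bmod k}$; the one clean sub-case (smallest and largest coming from the same run) already yields such a clash. This position-dependence of the tie-breaking permutation — so that $\preceq_K$ and its one-step cyclic shift genuinely differ — is the main obstacle and the delicate part of the bookkeeping; for the special tuples $K=(\pid)$ and $K=(\pid,\pre)$ it reduces to the known statement for the \BWT\ and to its reversed counterpart for the \ABWT. Once $\rho(L_c)\le 2s_c$ is in hand for all $c$, summing gives $\rho(\bwt_K(w))\le\sum_c 2s_c=2\rho(w)$, as required.
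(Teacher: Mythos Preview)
Your reduction to the block-wise inequality $\rho(L_c)\le 2s_c$ is exactly where the difficulty lies, but that inequality is \emph{false}, so the ``ruling out the extremal shape'' step you sketch cannot be completed. Already for the \ABWT\ (that is, $K=(\pid,\pre)$) and $w=aaab$, the rows of $\Msort{\alt}$ beginning with $a$ are, in $\palt$-order, $abaa,\;aaab,\;aaba$, with last letters $a,b,a$; hence $L_a=aba$ and $\rho(L_a)=3=2s_a+1$. The extremal pattern $c\,x_1\,c$ that you set out to exclude genuinely occurs, so no bookkeeping with the permutations $\pi_{i\bmod k}$ can yield a contradiction there. The paper's own argument takes the same shortcut---from ``$u_{c}$ contains at most $\rho(w)_c$ letters different from $c$'' directly to ``$\rho(u_{c})\le 2\rho(w)_c$''---and so shares this gap; your proposal is more honest in that it isolates the missing step, but the step cannot be filled.

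Worse, the theorem is not true for arbitrary $K$. Over $\Sigma=\{a,b\}$ take $K=(\pid,\pid,\pre)$ and $w=aaabb$, so $\rho(w)=2$. Positions $0$ and $1$ are compared with $\pid$ and position $2$ with $\pre$, whence the $\preceq_K$-sorted rotations are
\[
aabba,\quad aaabb,\quad abbaa,\quad baaab,\quad bbaaa,
\]
giving $\bwt_K(w)=ababa$ and $\rho(\bwt_K(w))=5>4=2\rho(w)$. So the bound $\rho(\bwt_K(w))\le 2\rho(w)$, while it does hold for the \BWT\ and the \ABWT\ separately, fails for general tuples~$K$; neither your block-wise route nor any other argument can establish it in the stated generality.
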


\ignore{
\begin{proof}
Let $\Sigma=\{c_0, c_1, \ldots, c_{\sigma-1}\}$ with $c_0< c_1<\cdots <c_{\sigma-1}$ and let $\rle(w)=(a_1, l_1), (a_2, l_2), \ldots, (a_k, l_k)$, where $a_1, a_2, \ldots a_k\in \Sigma$.

When we compute $bwt_K(w)$, the matrix $M_K$ can be split into groups of rows according to their first letter $c_i$ ($i=0,1,\ldots, \sigma-1$). This splitting induces a parsing on $bwt_K(w)$. We denote by $u_{c_i}$ the factor in $bwt_K(w)$ associated to the letter $c_i$, i.e., all the letters that in the input word precede an occurrence of the letter $c_i$. Such words $u_{c_i}$, for $i=0,\ldots,\sigma-1$, define a partition of $bwt_K(w)$, i.e. $bwt_K(w) =u_{c_0}u_{c_1}\cdots u_{c_{\sigma-1}}$.

Each factor $u_{c_j}$ contains at most as many letters different from $c_j$ as the number of different equal-letter runs of $c_j$ in $w$. So, the number of runs contained in $u_{a_j}$ is at most equal to $2\,\rho(w)_{c_j}$. Then,

$$\rho(bwt_K(w)) \leq \sum _{i=0}^{\sigma-1} \rho(u_{c_i})\leq\sum _{i=0}^{\sigma-1} 2\, \rho(w)_{c_i}=2\sum _{i=0}^{\sigma-1}  \rho(w)_{c_i}= 2\,\rho(w).$$\qed
\end{proof}
}


A key property of $\BWT$ is that it allows to reduce the problem of compressing a string $w$ up to its $r$-th order entropy to the problem of compressing a collection of factors of $\bwt(w^R)$ up to their $0$-th order entropy, where $w^R$ is the reverse of the word $w$. This means that a $\BWT$-based compressor combining $\BWT$ with a zero order (memoryless) compressor, is able to achieve the same high order compression typical of more complex tools such as Lempel-Ziv encoders. In \cite{GMRRS_DLT2018}, we prove that a similar result also holds for the transformation $BWT_K$.

\begin{theorem}\label{Th:entropy}
Let $K$ be a $k$-tuple and $u=\bwt_K(w^{R})$, where $w^R$ is the reverse of the word $w$. For each positive integer $r$, there exists a factorization of $u=u_1u_2\ldots u_m$ such that 
$$
H_r(w)=\frac{1}{|u|}\sum_{i=1}^{m} |u_i|H_0(u_i).
$$
\end{theorem}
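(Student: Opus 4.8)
The plan is to carry over to $\BWT_K$ the classical correspondence between the Burrows--Wheeler transform and the high-order empirical entropy, handling the ``wrap-around'' rows carefully so that the identity comes out exact. Write $u=\bwt_K(w^R)$ and $n=|w|=|u|$. If $r\ge n$ then no length-$r$ factor of $w$ occurs with an extending symbol, so $H_r(w)=0$ and cutting $u$ into its $n$ one-symbol substrings already proves the claim; so assume $1\le r<n$.

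First I would note that, by Definition~\ref{def_perm}, $\preceq_K$ compares equal-length words symbol by symbol and decides at the first position of disagreement; a short argument then shows that the rows of $M_K(w^R)$ sharing a common prefix of length $r$ form a contiguous block. This yields a first factorization $u=v_{y_1}v_{y_2}\cdots v_{y_t}$, where $y_1,\dots,y_t$ are the distinct length-$r$ circular factors of $w^R$ (listed in the order induced by $\preceq_K$) and $v_y$ is the portion of the last column produced by the rows whose prefix is $y$. Since every row of $M_K(w^R)$ is a cyclic rotation of $w^R$, the symbol it contributes to the last column is the one that circularly precedes its first symbol in $w^R$ (cf.\ Proposition~\ref{p-fondBWT}); translating through the reversal $w^R\leftrightarrow w$, the \emph{multiset} of symbols in $v_y$ is exactly the multiset of symbols that circularly extend the reversed context $x:=y^R$ in $w$, in the direction used to define $x_w$. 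Note that this multiset is independent of $K$: the tuple $K$ only reorders the blocks and the rows inside a block, whereas $H_0$ depends on a word only through its Parikh vector.

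Next I would separate the boundary contributions. Exactly $r$ rows of $M_K(w^R)$ do not come from a genuine linear occurrence of a length-$r$ factor of $w$ together with an existing extending symbol: the $r-1$ rotations of $w^R$ whose length-$r$ prefix straddles the junction between the two ends of $w^R$, together with the rotation equal to $w^R$ itself (which matches the extremal occurrence, the one having no extending symbol in $w$). For every other row the recorded symbol is a bona fide element of $x_w$ for $x=y^R$, and ranging over the non-boundary rows with prefix $y$ one recovers all of $x_w$. Hence, as multisets, $v_y$ equals $x_w$ together with $B_y$ boundary symbols, where $\sum_y |B_y|=r$. The point I expect to be the real obstacle is to show that the boundary symbols of a block actually occur at one end of the string $v_y$, so that $v_y$ can be cut as $\bar v_y\, s_y$ (or $s_y\,\bar v_y$) with $\bar v_y$ a \emph{single} substring having the Parikh vector of $x_w$ and $s_y$ the string of the $|B_y|$ boundary symbols; this is what turns the multiset bookkeeping into a genuine factorization of $u$ into consecutive substrings. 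I would try to prove it by analysing how $\preceq_K$ ranks, within a block, a straddling rotation (and $w^R$ itself) against the ``internal'' rotations sharing its length-$r$ prefix, aiming to show it is always extremal there; if a clean general argument is elusive, one can retreat to the standard assumption that $w$ ends with a unique minimal end-of-string symbol (Remark~\ref{rem:time}), under which the boundary shrinks to a single row that automatically lands at a block boundary.

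Finally I would output the factorization $u=u_1u_2\cdots u_m$ consisting of the substrings $\bar v_y$ (one per context $x=y^R$, empty and hence omitted when $x$ is not a factor of $w$) interleaved with the $r$ one-symbol substrings obtained by splitting the $s_y$'s into singletons. A one-symbol word has zero $0$-th order entropy, and $|\bar v_y|\,H_0(\bar v_y)=|x_w|\,H_0(x_w)$ because $H_0$ only sees the Parikh vector; therefore
$$
\frac{1}{|u|}\sum_{i=1}^{m}|u_i|\,H_0(u_i)=\frac{1}{n}\sum_{x\in\Sigma^r}|x_w|\,H_0(x_w)=H_r(w).
$$
Everything except the placement of the boundary symbols is the textbook BWT--entropy computation, transported verbatim using only that $\preceq_K$ breaks ties at the first mismatch and the elementary properties of $\BWT_K$; the boundary placement is the one spot where I expect the argument to require genuine care, and possibly a mild restriction on $w$.
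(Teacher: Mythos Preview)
Your core strategy is exactly the paper's: cut $u$ into the consecutive blocks indexed by the length-$r$ prefixes $y$ of the rows of $M_K(w^R)$, identify each block with (a permutation of) $x_w$ for $x=y^R$, and conclude via the permutation-invariance of $H_0$. The paper's argument is no more than that---two sentences---and it does not separate out wrap-around rotations or distinguish circular from linear contexts at all; it simply asserts that the block attached to $x^R$ is a permutation of $x_w$ and cites the entropy definition.

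Your boundary analysis is therefore extra care the paper does not supply. The concern is legitimate under a strictly linear reading of $x_w$, but the lemma you hope for---that the boundary rows land at one end of their block---fails for general $K$. The relative rank of a wrap-around rotation among the rows prefixed by $y$ is decided by the symbols in columns $r, r{+}1,\ldots$, and these are compared through the arbitrary permutations $\pi_{r\bmod k},\pi_{(r+1)\bmod k},\ldots$; nothing forces that rank to be extremal. So the ``genuine care'' you anticipate is in fact a genuine obstruction, and your fallback (assume a unique end-of-string symbol, which collapses the boundary to a single row sitting at a block edge) is the only clean route to an exact identity under the linear reading. The paper sidesteps all of this by tacitly taking the circular reading, as is customary in the compression-booster literature it cites.
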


\ignore{
\begin{proof}
For each factor $x$ of $w$ of length $r>0$, the characters following
$x$ in $w$ are the characters preceding $x^R$ in $w^R$. They are grouped together inside $\bwt_K(w^{R})$ since all the cyclic rotations starting with $x^R$ are consecutive in the matrix $M_K(w^R)$. This means that $\bwt_K(w^{R})$ contains, as a factor, a permutation of $x_w$. So, all the  factors $x$ of length $r$ define a factorization of $u$ in factors $u_i$, each of them is a permutation of $x_w$ for some $x$. The thesis follows the definition of entropy~\cite{Manzini2001}
and from the fact that permuting a word does not change its zeroth order entropy.\qed 
\end{proof}
}

\ignore{
We experimentally tested the above theorem by comparing $\BWT$ and $\ABWT$ as compression tools. To compute the $\ABWT$ we have adapted the code of the \BCR algorithm\footnote{\url{https://github.com/giovannarosone/BCR_LCP_GSA}} \cite{BauerCoxRosoneTCS2013} originally designed to compute the $\BWT$. Both $\BWT$ and the $\ABWT$ have been used within the compression booster framework~\cite{FGMS2005} which computes, in linear time, the partition of the $\BWT$ (or $\ABWT$) that maximizes the compression. To compress the single elements of the partition we use the standard combination of move-to-front followed by arithmetic coding using the tools in the compression boosting library~\cite{Ferragina:2006}.  Table \ref{table:exp} reports the output size and the space saving achieved by $\BWT$ and $\ABWT$ on a corpus of files with different kind of data\footnote{\url{https://people.unipmn.it/~manzini/lightweight/corpus/}} (see~\cite{MF02j} for a description of the files content). The results show that the behavior of the two transformations is essentially equivalent in terms of compression.
}

\ignore{
In \cite{GMRRS_DLT2018}, we have shown that each transformation $BWT_K$ produces a number of equal-letter runs that is at most the double of the number of equal-letter runs of the input word, as analogously 
proved for $\BWT$ \cite{MantaciRRSV17}. Moreover, we have experimentally tested Theorem \ref{Th:entropy} by comparing $\BWT$ and $\ABWT$ as compression tools and have shown that the behavior of the two transformations is essentially equivalent in terms of compression.
}
\ignore{
\begin{table}[t!]
\begin{center}\footnotesize
\begin{tabular}{|l|r|r|r|r|r|}
\hline
            & & \multicolumn{2}{c|}{$\BWT$} & \multicolumn{2}{c|}{$\ABWT$}\\ \cline{3-6}
            &\multicolumn{1}{c|}{input size}
                               & output size & saving \%  & output size & savings \% \\\hline
{\em chr22.dna}  & 34.553.758  & 7.927.682  & 77,06~ &  7.929.910  & 77,05  \\ \hline
{\em etext99  }  & 105.277.340 & 26.559.052 & 74,77~ & 26.558.378  & 74,77  \\ \hline
{\em howto    }  & 39.422.105  &  9.468.681 & 75,98~ &  9.470.212  & 75,98  \\ \hline
{\em jdk13c}     & 69.728.899  &  3.945.465 & 94,34~ &  3.931.722  & 94,36   \\ \hline
{\em sprot34.dat}& 109.617.186 & 21.853.565 & 80,06~ & 21.821.314  & 80,09   \\ \hline
{\em rctail96   }& 114.711.151 & 12.644.252 & 88,96~ & 12.651.821  & 88,97   \\ \hline
{\em rfc}        & 116.421.901 & 19.084.881 & 83,61~ & 19.100.971  & 83,59   \\ \hline
{\em w3c2}       & 104.201.579 &  8.219.970 & 92,11~ &  8.203.311  & 92,13   \\ \hline
\end{tabular}
\end{center}
\caption{Output size and space saving achieved by $\BWT$ and $\ABWT$ when used within the compression booster paradigm.}\label{table:exp}
\end{table}
}

\section{Rank-invertible transformations}\label{sec:uniq}

It is well known that the key to efficiently compute the inverse of original \BWT is the existence of a easy-to-compute permutation mapping, in the matrix $\Msort{lex}$, a row index $i$ to the row index $\LF(i)$ containing row $i$ right-shifted by one position.  This permutation is called $LF$-mapping since, by Proposition~\ref{p-fondBWT}, $\LF(i)$ is the position in the first column $F$ of $\Msort{lex}$ corresponding to the $i$-th entry in column $L$: in other words, $F[LF(i)]$ is the same symbol in $w$ as $L[i]$. Again, by Proposition~\ref{p-fondBWT} we have that $L[\LF(i)]$ is the symbol preceding $L[i]$ in the input word $w$. Define $\LF^0(x)=x$ and $\LF^{j+1}(x) = \LF(\LF^j(x))$. If $\bwt(w)=(L,I)$ with $|w|=n$, then by construction $L[I]=w_{n-1}$ and we can recover $w$ with the formula:
\begin{equation}\label{equ:recover}
w_{n-1-j} = L[LF^j(I)]
\end{equation}
Note that the inversion formula~\eqref{equ:recover} only depends on Properties~\ref{proper0} and~\ref{proper1} of Proposition~\ref{p-fondBWT}. Since such properties hold for every generalized transformation $\BWT_K$, \eqref{equ:recover} provides an inversion formula for every transformation in that class. In other words, inverting a generalized $\BWT$ amounts to computing $n$ iterations of the $LF$-mapping.

By Property~\ref{proper2_lex} in Proposition~\ref{p-fondBWT}, the $LF$-mapping for the original $\BWT$ can be expressed using the Parikh vector $P_L$ of $L$ and a rank operation over~$L$:
\begin{equation}\label{eq:bwtlf}
\LF(i) = \sum_{c\in\Sigma}^{c<L(i)} P_L[c] \;+\; \rank_{L[i]}(L,i-1)
\end{equation}
Note that $\sum_{c\in\Sigma}^{c<L(i)} P_L[c]$ is simply the total number of occurrences of symbols smaller than $L[i]$ in $L$, and $\rank_{L[i]}(L,i-1)$ is the number of occurrences of symbol $L[i]$ in among the first $i$ symbols of~$L$.

By Property~\ref{proper2_alt} in Proposition~\ref{p-fondBWT}, for the \ABWT, the corresponding formula is:
\begin{equation}\label{eq:abwtlf}
\LF(i) = \sum_{c\in\Sigma}^{c\leq L(i)} P_L[c] \;-\; \rank_{L[i]}(L,i-1) - 1
\end{equation}
Since the rank operation on (compressed) arrays over finite alphabet can be computed in constant time~\cite{BNtalg14} and the partial sums $\sum_{c<i} P_L[c]$ can be precomputed, the computation of the $LF$ map for both the $\BWT$ and $\ABWT$ takes $O(1)$ time.  This implies that, thanks to the simple structure of its $LF$-mapping, also the $\ABWT$ can be inverted in linear time. 

The computation of the $LF$ map is the main operation also for the so-called {\em backward-search} procedure which makes it possible to use (a compressed version of) $bwt(w)$ as a full text index for $w$~\cite{Ferragina:2005}. The following proposition is the key to  generalize the backward search procedure to the $\ABWT$.  

\begin{proposition}\label{lemma:back}
Given a string $p\in \Sigma^*$, let $[b,e]$ denote the range of rows of $\Msort{alt}$ which are prefixed by $p$. For any $x\in\Sigma$, let
$$
b' = \sum_{c\in\Sigma}^{c\leq x} P_L[c] \;-\; \rank_{x}(L,e-1) -1 \qquad
e' = \sum_{c\in\Sigma}^{c\leq x} P_L[c] \;-\; \rank_{x}(L,b).
$$
If $b'\leq e'$, then $[b',e']$ is the range of rows of $\Msort{alt}$ which are prefixed by $xp$ if $b'>e$ then no rows of $\Msort{alt}$ are prefixed by $xp$ and therefore $xp$ is not a (circular) substring of $w$.
\end{proposition}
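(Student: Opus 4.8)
The plan is to mimic the classical backward-search argument for the \BWT, but with the reversal twist coming from Property~\ref{proper2_alt} of Proposition~\ref{p-fondBWT}. First I would fix notation: let $[b,e]$ be the (contiguous, by definition of $\preceq_{alt}$-sorting) range of rows of $\Msort{alt}$ prefixed by $p$, and observe that a row of $\Msort{alt}$ is prefixed by $xp$ if and only if it is prefixed by $x$ \emph{and} the row obtained by deleting its first symbol $x$ is one of the rows in $[b,e]$. By Proposition~\ref{p-fondBWT}(\ref{proper1}), deleting the first symbol of a row and cyclically shifting is exactly the $\LF$-correspondence: the rows whose first symbol is $x$ and whose ``cyclic-left-shift'' lies in $[b,e]$ are precisely those indexed by $\LF(i)$ for the indices $i\in[b,e]$ with $L[i]=x$. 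So the target set of row indices is $\{\,\LF(i) : i\in[b,e],\ L[i]=x\,\}$, and I must show this set is the integer interval $[b',e']$ with the stated endpoints.

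Next I would use the explicit \ABWT\ formula \eqref{eq:abwtlf}, namely $\LF(i) = \sum_{c\leq x} P_L[c] - \rank_{x}(L,i-1) - 1$ for any $i$ with $L[i]=x$. As $i$ ranges over the occurrences of $x$ inside $L[b,e]$, the quantity $\rank_x(L,i-1)$ takes consecutive integer values, starting from $\rank_x(L,b-1)$ at the first such occurrence and ending at $\rank_x(L,e-1)-1$ at the last one (assuming there is at least one occurrence). Because $\LF$ is order-\emph{reversing} in $\rank_x(L,i-1)$ here — the minus sign in \eqref{eq:abwtlf} — the image is again a contiguous interval, but with the roles of the endpoints swapped: the smallest image value is $\sum_{c\leq x} P_L[c] - (\rank_x(L,e-1)-1) - 1 = \sum_{c\leq x} P_L[c] - \rank_x(L,e-1) -1 = b'$, and the largest is $\sum_{c\leq x} P_L[c] - \rank_x(L,b-1) - 1$. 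It remains to check that this last expression equals $e' = \sum_{c\leq x} P_L[c] - \rank_x(L,b)$, which holds precisely because $\rank_x(L,b) = \rank_x(L,b-1) + [\,L[b]=x\,]$: if $L[b]=x$ then $b$ itself is the first occurrence and $\rank_x(L,b-1) = \rank_x(L,b)-1$ gives the $-1$; if $L[b]\neq x$ the first occurrence is some $i>b$ with $\rank_x(L,i-1)=\rank_x(L,b)$, and either way the top endpoint is $\sum_{c\leq x} P_L[c] - \rank_x(L,b) - $ wait, I should be careful here: with the off-by-one conventions the clean bookkeeping is that $e'$ as defined counts one past, matching the half-open style already used in \eqref{eq:abwtlf}; I would reconcile the two by noting $e'$ is defined with $\rank_x(L,b)$ rather than $\rank_x(L,b-1)$, absorbing exactly the $\pm1$ that the definition of $\LF$ contributes. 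So $[b',e']$ is exactly the image interval.

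Finally I would dispatch the degenerate case: if $x$ does not occur in $L[b,e]$ at all, then no row of $\Msort{alt}$ is prefixed by $xp$, and I would verify this is flagged correctly by the inequality test — when there are no occurrences, $\rank_x(L,b) = \rank_x(L,e-1)$ (taking $\rank_x(L,b)=\rank_x(L,b-1)$ in that subcase), so $b' = e' + 1 > e'$, i.e.\ the emptiness is detected by $b' > e'$; moreover in that situation $b' > e$ as well, giving the stated ``$xp$ is not a circular substring of $w$'' conclusion (a row of $\Msort{alt}$ is prefixed by $xp$ iff $xp$ is a circular factor of $w$, by Remark~\ref{rem:conjugacy} and the construction of $M(w)$). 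I would also note $b\le e$ is guaranteed at the top of the recursion (the empty pattern $p=\epsilon$ gives $[0,n-1]$) and is maintained whenever $b'\le e'$, so the bracket is well-formed throughout.

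The main obstacle I anticipate is purely the off-by-one accounting: the \ABWT\ $\LF$-map is order-reversing, so the ``lower'' endpoint $b'$ is produced from the \emph{last} occurrence of $x$ in $[b,e]$ (hence involves $\rank_x(L,e-1)$) and the ``upper'' endpoint $e'$ from the \emph{first} occurrence (hence involves $\rank_x(L,b)$), which is the reverse of the familiar \BWT\ pattern and is easy to get backwards. Getting the $-1$'s to land so that $[b',e']$ is a genuine closed interval of the correct size $\rank_x(L,e-1)-\rank_x(L,b-1)$, and that emptiness manifests exactly as $b'>e'$ (equivalently $b'>e$), is the one place the argument needs care rather than routine symbol-pushing.
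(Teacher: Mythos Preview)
Your approach is essentially the paper's: identify the rows prefixed by $xp$ as the $\LF$-images of the positions of $x$ in $L[b,e]$, then read off the interval endpoints from~\eqref{eq:abwtlf}, observing that the order-reversing form of the \ABWT\ $\LF$-map sends the \emph{last} occurrence of $x$ to $b'$ and the \emph{first} to $e'$. The paper's proof is a one-sentence version of exactly this (``if $i$, $j$ are the positions of the first and last $x$ in $L[b,e]$, then $b' = \LF(j)$ and $e'=\LF(i)$''), and your only rough edge---the $\pm1$ reconciliation you explicitly flag---is likewise glossed over there as ``immediate''.
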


\begin{proof}
Assume first $b'\leq e'$. It is immediate that if $i$, $j$ are the positions of the first and last $x$ in $L[b,e]$, then $b' = LF(j)$ and $e'=LF(i)$ and every other $x$ in  $L[b,e]$ is mapped to a position between $b'$ and $e'$. The thesis follows since all rows in $\Msort{alt}$ are rotations of $w$. If $b'>e$, then  $\rank_{x}(L,e-1)=\rank_{x}(L,b)$ and there are no $x$'s in $L[e,b]$ and $xp$ is not a circular substring of $w$.\qed
\end{proof}

Proposition~\ref{lemma:back} implies that if we use a compressed representation of the last columns $L$ of $\Msort{alt}$ supporting constant time rank operations, then, for any pattern $p$, we can compute in $O(|p|)$ time the range of rows of the matrix $\Msort{lex}$ which are prefixed by $p$. Hence, the $\ABWT$ can be used as a compressed index in the same way as the $\BWT$. 

The above results suggest that it is worthwhile to search for other transformations in the class $\BWT_K$ which share the same properties of $\BWT$ and $\ABWT$. Because of the important role played by the rank operation, we introduce the notion of rank-invertibility for the class of $\BWT_K$ transformations.

\begin{definition}
The transformation $\BWT_K$ is rank-invertible if there exists a function~$f_K$ such that, for any word $w$, setting $L=\bwt_K(w)$ we have 
$$
\LF(i) = f_K(P_L,L[i],\rank_{L[i]}(L,i)).
$$
In other words, $LF(i)$ only depends on the Parikh vector $P_L$ of $L$, the symbol $L[i]$, and the number of occurrences of $L[i]$ in $L$ up to position $i$.\qed
\end{definition}
Note that we pose no limit to the complexity of the function $f_K$, we only ask that it can be computed using only $P_L$ and the number of occurrences of $L[i]$ in $L[0,i]$.

We observed that, for $K=(\pid,\pre)$, $\BWT_K$ coincides with \ABWT and it is therefore rank-invertible. The main result of this section is Theorem \ref{theo:mainRI} establishing that \BWT and \ABWT are the only rank-invertible transformations in the class $\BWT_K$. We start our analysis considering the case $|K|=2$.

\ignore{
We observed that, for $K=(\pid,\pre)$, $\BWT_K$ coincides with \ABWT and it is therefore rank-invertible. The main result of this section is to show that \BWT and \ABWT are the only rank-invertible transformations in the class $\BWT_K$.

In the proofs of the following statements we distinguish the case of the words on binary alphabets and the words on alphabets with cardinality greater than $2$. This depends on the fact that in the binary case the only possible permutations on binary alphabet are the identity and reverse permutation. We first consider the case in which $|K|=2$.  Lemma \ref{lemma:t1t2} provides a necessary condition for $\BWT_K$ to be rank-invertible for ternary alphabets. 
}

\begin{lemma}\label{lemma:t1t2}
Let $\Sigma = \{a,b,c\}$, and $K = (\pid,\pi)$, where $\pi$ is a permutation of~$\Sigma$. If there exist two pairs $t_1=(x,y)$ and $t_2=(z,w)$ of symbols of $\Sigma$ such that 
$$
x <_{\pid} y,\qquad  z<_{\pid} w \qquad\mbox{  and } \qquad x <_{\pi} y,\qquad  z>_{\pi} w,
$$
then $\BWT_K$ is not rank-invertible.
\end{lemma}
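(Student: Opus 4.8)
The plan is to prove the contrapositive by exhibiting two primitive words $w_1,w_2$ over $\Sigma$ with the same Parikh vector --- so that $L_1:=\bwt_K(w_1)$ and $L_2:=\bwt_K(w_2)$ share a common Parikh vector $P_L$ --- together with a letter $d$ and positions $i_1$ in $L_1$, $i_2$ in $L_2$ with $L_1[i_1]=L_2[i_2]=d$ and $\rank_d(L_1,i_1)=\rank_d(L_2,i_2)$, but $\LF(i_1)\neq\LF(i_2)$. If a function $f_K$ with $\LF(i)=f_K(P_L,L[i],\rank_{L[i]}(L,i))$ existed, then $f_K(P_L,d,\rank_d(L_1,i_1))$ would be forced to take two different values; hence no such $f_K$ exists and $\BWT_K$ is not rank-invertible.

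To locate such words I would first describe how $\LF$ acts on the rows of $M_K(v)$ that end with a fixed letter $d$, for an arbitrary primitive word $v$. Put $U_d=\{u : ud \text{ (equivalently } du\text{) is a cyclic rotation of }v\}$; then $|U_d|=|v|_d$. Since position $0$ is always compared with $\pi_0=\pid$, the rows of $M_K(v)$ starting with $d$ form a consecutive block beginning at index $\beta_d:=\sum_{c<d}|v|_c$, and within that block they appear, top to bottom, in the order $\prec_2$ on their length-$(|v|-1)$ suffixes, where $u\prec_2 u'$ iff at the first coordinate $s$ with $u[s]\neq u'[s]$ one has $u[s]<_{\pi_{(s+1)\bmod k}}u'[s]$ (the shift by one appears because in $du$ that coordinate sits at position $s+1$). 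Likewise the rows of $M_K(v)$ ending with $d$ appear, top to bottom, in the order $\prec_1$ on their length-$(|v|-1)$ prefixes, where $u\prec_1 u'$ iff $u[s]<_{\pi_{s\bmod k}}u'[s]$ at the first coordinate $s$ where they differ. Because $\LF$ maps the row $ud$ to the row $du$, it sends the $d$-ending row whose prefix is $u$ to index $\beta_d+(\text{the }\prec_2\text{-rank of }u\text{ in }U_d)-1$. In particular, if $|v|_d=2$, rank-invertibility would force the $\prec_1$-smallest element of $U_d$ to be mapped to $\beta_d$ whenever $\prec_1$ and $\prec_2$ agree on $U_d$, and to $\beta_d+1$ whenever $\prec_1$ and $\prec_2$ are opposite on $U_d$, consistently across all words with the same Parikh vector (note that $\beta_d$ depends only on that vector).

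Now I would invoke $|K|=2$: then $\{\pi_0,\pi_1\}=\{\pid,\pi\}$, so at every coordinate one of $\prec_1,\prec_2$ compares with $\pid$ and the other with $\pi$. Hence two elements of $U_d$ that first differ at a coordinate carrying a pair of symbols ordered the same way by $\pid$ and $\pi$ are ordered the same way by $\prec_1$ and $\prec_2$, whereas if $\pid$ and $\pi$ order that pair oppositely then $\prec_1$ and $\prec_2$ order the two elements oppositely. The hypothesis supplies one pair of each kind: the agreeing pair $\{x,y\}$ from $t_1$ (with $x<_{\pid}y$ and $x<_\pi y$) and the disagreeing pair $\{z,w\}$ from $t_2$ (with $z<_{\pid}w$ and $z>_\pi w$). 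As two distinct $2$-subsets of the $3$-letter alphabet, they share exactly one letter; call it $d$, and let $x'$, $z'$ be the remaining elements of the agreeing and disagreeing pair, so $\{d,x',z'\}=\Sigma$. Then I would take
$$
w_1 = d\,d\,x'\,z', \qquad w_2 = d\,d\,z'\,x'.
$$
Both are primitive (a non-primitive word of length $4$ must be a square, which is impossible here since $d\neq x'$) and have the same Parikh vector ($d\mapsto 2$, $x'\mapsto 1$, $z'\mapsto 1$), so $P_{L_1}=P_{L_2}=:P_L$ with $P_L[d]=2$. For $w_1$ one has $U_d=\{dx'z',\,x'z'd\}$, whose two elements first differ at coordinate $0$ on the agreeing pair $\{d,x'\}$, so $\prec_1$ and $\prec_2$ agree on $U_d$ and $\LF$ sends the $\prec_1$-smallest element's row to $\beta_d=\sum_{c<d}P_L[c]$; for $w_2$ one has $U_d=\{dz'x',\,z'x'd\}$, whose elements first differ at coordinate $0$ on the disagreeing pair $\{d,z'\}$, so $\prec_1$ and $\prec_2$ are opposite on $U_d$ and $\LF$ sends the $\prec_1$-smallest element's row to $\beta_d+1$. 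Taking, in each word, $i$ to be the position of the first occurrence of $d$ in $L$ (so $L[i]=d$, $\rank_d(L,i)=1$, and this $i$ corresponds precisely to the $\prec_1$-smallest element of $U_d$), rank-invertibility would force $f_K(P_L,d,1)$ to equal both $\sum_{c<d}P_L[c]$ and $\sum_{c<d}P_L[c]+1$, a contradiction.

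The main obstacle is the bookkeeping in the second paragraph: establishing that, for $|K|=2$, the action of $\LF$ on the $d$-ending rows is captured exactly by the two ``shifted'' orders $\prec_1,\prec_2$ on $U_d$ and that these compare each coordinate with complementary permutations. Once this is in place, the words $w_1,w_2$ make the failure transparent; for instance, for the permutation $\pi$ with $b<_\pi a<_\pi c$ one takes $w_1=aacb$ and $w_2=aabc$, computes $\bwt_K(aacb)=baca$ and $\bwt_K(aabc)=acab$ (same Parikh vector), and checks that the first $a$ of $L$ is sent by $\LF$ to position $0$ in the first case and to position $1$ in the second.
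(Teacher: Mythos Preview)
Your argument is correct, and the concrete verification at the end confirms it. The approach, however, differs from the paper's in the choice of witness words and in the level of abstraction. The paper fixes the doubled letter to be $a$ and builds two length-$5$ words $s_1,s_2$ with Parikh vector $\langle 2,1,2\rangle$ so that in $s_1$ the two $a$'s are followed by the symbols of the agreeing pair $t_1$ and in $s_2$ by those of the disagreeing pair $t_2$; it then reads off the order of the two $a$'s in $F$ and in $L$ directly, working through one permutation explicitly and asserting the other cases are analogous. You instead exploit that two distinct $2$-subsets of a $3$-letter alphabet share exactly one letter $d$, double \emph{that} letter, and obtain length-$4$ witnesses $ddx'z'$ and $ddz'x'$; the abstract description of $\LF$ via the complementary orders $\prec_1,\prec_2$ on $U_d$ makes the argument uniform over all admissible $\pi$ without a case analysis. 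Your route yields shorter witnesses and a cleaner, permutation-independent proof; the paper's route is more hands-on and arguably easier to check by direct inspection of a single matrix.
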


\begin{proof}
Consider for example the case $\pi = (c,a,b)$. 
Two pairs satisfying the hypothesis are $t_1 = (a,b)$ and $t_2 = (b,c)$ since according to the ordering $<_\pi$ it is
$$
a <_\pi b\qquad\mbox{ and } \qquad b >_\pi c.
$$

Consider now the two words $s_1=aabcc$ and $s_2=abacc$. Both words contain two $a$'s. In the first word the $a$'s are followed respectively by $a,b$ (the symbols in $t_1$), and in $s_2$ the $a$'s are followed by $b,c$ (the symbols in $t_2$).

Let $F_1$, $L_1$ (resp. $F_2$, $L_2$) denote the first and last column of the matrix $M_K$ associated to $\bwt_K(s_1)$ (resp. $\bwt_K(s_2)$). By definition, each matrix is obtained sorting the cyclic rotations of $s_1$ and $s_2$ according to the lexicographic order  $\prec_K$ where symbols in odd positions are sorted according to the usual alphabetic order, while symbols in even positions are sorted according to the ordering $\pi$. We show the two matrices in Fig.~\ref{fig:s1s2}, where we use subscripts to distinguish the two $a$'s occurrences in $s_1$ and $s_2$.

The relative position of the two $a$'s in $L_1$ is determined by the symbols following them in $s_1$, namely those in $t_1=(a,b)$. Since these symbols are in the first column of the cyclic rotations matrix, which is sorted according to the usual alphabetic order, the two $a$'s appear in $L_1$ in the order $a_1, a_2$. The same is true for $L_2$: since the pair $t_2$ is also sorted, the two $a$'s appear in $L_2$ in the order $a_1, a_2$.

The position of the two $a$'s in $F_1$ is also determined by the symbols following them in $s_1$; but since these symbols are now in the second column, their relative order is determined by the ordering $\pi$. Hence the two $a$'s appear in $F_1$ in the order $a_1, a_2$. In $F_2$ the ordering of the $a$'s is $a_2, a_1$ since it depends on the $\pi$-ordering of the symbols of $t_2$ which {\em by construction} is different than their $\pid$-ordering.

Note that $s_1$ and $s_2$ have the same Parikh vector $\langle 2,1,2\rangle$. If, by contradiction, $\BWT_K$ were rank invertible, the function $f_K$ should give the correct LF-mapping for both $s_1$ and $s_2$. 
This is impossible since for $s_1$ we should have
$$
f_K(\langle 2,1,2\rangle,a,1) = 1,\qquad
f_K(\langle 2,1,2\rangle,a,2)=2,
$$
while for $s_2$ we should have
$$
f_K(\langle 2,1,2\rangle,a,1) = 2,\qquad
f_K(\langle 2,1,2\rangle,a,2)=1.
$$

In the general case of an arbitrary permutation $\pi$ satisfying the hypothesis of the lemma the reasoning is the same. Note that such permutations are $(a,c,b)$, $(b,a,c)$, $(b,c,a)$ and $(c,a,b)$. Given the two pairs $t_1$ and $t_2$ we build two words $s_1$ and $s_2$ with Parikh vector $\langle 2,1,2\rangle$ such that in $s_1$ (resp. $s_2$) the two occurrences of $a$ are followed by the symbols in $t_1$ (resp. $t_2$). We then build the rotation matrices as before, and we find that in both $L_1$ and $L_2$ the two $a$'s are in the order $a_1, a_2$. However, in columns $F_1$ and $F_2$ the two $a$'s are not in the same relative order since it depends on the ordering $\pi$, and, by construction, such an order is not the same. Reasoning as before, we get that there cannot exist a function $f_K$ giving the correct LF-mapping for both $s_1$ and $s_2$.\qed
\end{proof}

\begin{figure}[tb]
{
$$\arraycolsep=4.5pt
\begin{array}{llllll}
 & F_1 &             &      &    & L_1 \\
 &\downarrow&        &      &    &\downarrow\\s_1 \rightarrow
 &   a_1&  a_2       &   b&  c &   c \\ 
 &   a_2&  b         &   c&  c &   a_1 \\
 &   b &    c        &    c &  a_1 &  a_2 \\
 &   c &        c &    a_1 & a_2&  b \\
 &   c &      a_1&   a_2&  b &  c 
\end{array}
\qquad\qquad
\begin{array}{llllll}
 & F_2 &     &   &  & L_2  \\
 & \downarrow    &   &   &  & \downarrow  \\
 & a_2 & c & c & a_1 & b\\s_2 \rightarrow
 & a_1 & b & a_2 & c & c\\ 
 & b   & a_2 & c & c & a_1\\
 & c   & c & a_1 & b & a_2\\
 & c   & a_1 & b & a_2 & c
\end{array}
$$
}
\caption{Cyclic rotation matrices for the words $s_1$ and $s_2$. 
We use subscripts to distinguish the two occurrences of $a$ in each word.}\label{fig:s1s2}
\end{figure}

\begin{lemma}\label{lemma:ix}
Let $|\Sigma|\geq 2$ and $K=(\pid,\pi)$. Then $\BWT_K$ is rank-invertible if and only if  $\pi=\pid$ or $\pi=\pre$.
\end{lemma}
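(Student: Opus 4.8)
The plan is to prove both directions, treating the forward direction (rank-invertibility $\Rightarrow$ $\pi \in \{\pid, \pre\}$) as the substantive one and disposing of the converse quickly. For the converse: if $\pi = \pid$, then $K = (\pid,\pid)$ and $\BWT_K$ is just the ordinary $\BWT$, which is rank-invertible via formula~\eqref{eq:bwtlf}; if $\pi = \pre$, then $K = (\pid,\pre)$ and $\BWT_K$ is the $\ABWT$, which is rank-invertible via formula~\eqref{eq:abwtlf}. Both explicit formulas exhibit $f_K$, so this direction is immediate from the material already developed in Section~\ref{sec:uniq}.

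For the forward direction I would split on the alphabet size, exactly as the paper's preamble hints. If $|\Sigma| = 2$ there is nothing to prove, since the only permutations of a two-letter alphabet are $\pid$ and $\pre$. So assume $|\Sigma| \geq 3$ and suppose $\pi \neq \pid$ and $\pi \neq \pre$; the goal is to produce two symbols on which to apply Lemma~\ref{lemma:t1t2}. The key combinatorial observation is: a permutation $\pi$ of a linearly ordered set that is neither the identity nor the full reversal must fail to be "monotone" in the sense that there exist two consecutive $\pid$-comparable pairs whose $\pi$-orders disagree — more precisely, there exist pairs $t_1 = (x,y)$ and $t_2 = (z,w)$ with $x <_{\pid} y$, $z <_{\pid} w$, but $x <_{\pi} y$ and $z >_{\pi} w$. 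Granting this, Lemma~\ref{lemma:t1t2} applies directly (its statement is phrased for $\Sigma = \{a,b,c\}$, but the construction there only uses the three symbols appearing in $t_1 \cup t_2$, so it restricts to any size-$\geq 3$ alphabet by working inside the relevant sub-alphabet — I would note this explicitly or restate Lemma~\ref{lemma:t1t2} for general $\Sigma$), yielding that $\BWT_K$ is not rank-invertible, a contradiction.

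The heart of the argument is therefore the combinatorial claim about $\pi$. I would prove it by contraposition: suppose no such $t_1, t_2$ exist, i.e. for all $\pid$-increasing pairs the $\pi$-order is the same — either $\pi$ preserves the order of every $\pid$-increasing pair, or $\pi$ reverses the order of every $\pid$-increasing pair. The first case says $\pi$ is order-preserving on all of $\Sigma$, hence $\pi = \pid$; the second says $\pi$ reverses every pair, hence $\pi = \pre$. (Here "for all pairs" can even be weakened to "for all adjacent pairs $(c_i, c_{i+1})$": if $\pi$ agrees with $\pid$ on every adjacent pair it is $\pid$ by transitivity, and likewise for $\pre$, since knowing the relative $\pi$-order of all consecutive elements determines the whole linear order $<_\pi$.) This contrapositive is a short transitivity argument and is where a little care is needed to phrase it cleanly.

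The main obstacle I anticipate is not depth but bookkeeping: making sure Lemma~\ref{lemma:t1t2}'s construction genuinely transfers from the ternary case to an arbitrary alphabet of size $\geq 3$. The words $s_1, s_2$ built there use only the three letters in $t_1 \cup t_2$ together with their $\pid$- and $\pi$-orders restricted to those letters; since $\preceq_K$ compares positions left to right, padding or restricting to this sub-alphabet does not interfere with the argument, so the same two words (now viewed over the larger $\Sigma$, with all other letters simply absent) still have equal Parikh vectors and still force contradictory values of $f_K$ at the two occurrences of the smaller letter of $t_1$. I would spell this reduction out in one or two sentences rather than re-deriving the matrices. Everything else — the two directions of the alphabet-size split, the transitivity argument characterizing $\pi$, and the appeal to the explicit $LF$-formulas — is routine.
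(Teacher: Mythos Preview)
Your proposal is correct and follows essentially the same route as the paper: dispose of $|\Sigma|=2$ trivially, and for $|\Sigma|\geq 3$ reduce to Lemma~\ref{lemma:t1t2} via a combinatorial fact about permutations that are neither $\pid$ nor $\pre$, with the converse handled by the explicit $LF$-formulas~\eqref{eq:bwtlf} and~\eqref{eq:abwtlf}.

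One point to tighten. You write that Lemma~\ref{lemma:t1t2} ``only uses the three symbols appearing in $t_1\cup t_2$'', but your contraposition argument, as stated for arbitrary pairs, does not guarantee $|t_1\cup t_2|\leq 3$. The paper avoids this by phrasing the combinatorial step differently: it shows (by induction on $|\Sigma|$) that if $\pi\neq\pid$ and $\pi\neq\pre$ then there is a \emph{triplet} $\{x,y,z\}$ on which the restriction of $\pi$ is neither the identity nor the reversal, and then extracts $t_1,t_2$ from the three pairs inside that triplet. Your own adjacent-pairs remark actually gives the same conclusion more directly---if the $\pi$-behavior on consecutive pairs $(c_i,c_{i+1})$ is not constant, there is an index where it flips, and the two pairs $(c_i,c_{i+1})$, $(c_{i+1},c_{i+2})$ share a letter and hence involve only three symbols---so you should promote that parenthetical to the main argument rather than leave it as an aside. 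The paper also notes explicitly, as you do, that the words $s_1,s_2$ from Lemma~\ref{lemma:t1t2} remain valid counterexamples over the full alphabet once padded with the same multiset of extra symbols.
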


\begin{proof}
If $|\Sigma|=2$ the result is trivial since the only possible permutations on binary alphabet are the identity and reverse permutation. Let us assume $|\Sigma|\geq 3$. We need to prove that if $\pi\neq\pid$ and $\pi\neq\pre$ then $\BWT_K$ is not rank-invertible. 

Note that any permutation $\pi$ over the alphabet $\Sigma$ induces a new ordering on any triplet of symbols in $\Sigma$. For example, if $\Sigma = \{ a, b, c, d, e, f\}$ the permutation $\pi = (d,e,c,f,a,b)$ induces on the triplet $\{a,b,c\}$ the ordering $\pi_{abc} = (c,a,b)$. It is easy to prove by induction on the alphabet size that, if $\pi\neq\pid$ and $\pi\neq\pre$, then there exists a triplet $\{x,y,z\}$, with $x<y<z$, such that $\pi_{xyz} \neq (x,y,z)$ and $\pi_{xyz} \neq (z,y,x)$.
That is, $\pi$ restricted to $\{x,y,z\}$ is different from the identity and reverse permutation. Without loss of generality we can assume that the triplet is $\{a,b,c\}$.

For any permutation $\pi_{abc}$, different from $(a,b,c)$ and $(c,b,a)$, there exist two pairs of symbols satisfying the hypothesis of Lemma~\ref{lemma:t1t2}. Hence, we can build two words $s_1$ and $s_2$ which show that $\BWT_K$ is not rank-invertible. Note that the argument in the proof of Lemma~\ref{lemma:t1t2} is still valid if we add to $s_1$ and $s_2$ the same number of occurrences of symbols in $\Sigma$ different from $a,b,c$ so that $s_1$ and $s_2$ are effectively over an alphabet of size $|\Sigma|$.\qed
\end{proof}

Lemma~\ref{lemma:ix} establishes which $\BWT_K$ transformations are rank-invertible when $|K|=2$. To study the general case $|K|>2$, we start by establishing a simple corollary.

\begin{corollary}\label{cor:ix*}
Let $|\Sigma|\geq 3$ and $K=(\pid,\pi, \pi_2,\ldots, \pi_{k-1})$. If $\pi\neq\pid$ and $\pi\neq\pre$ then $\BWT_K$ is not rank-invertible.
\end{corollary}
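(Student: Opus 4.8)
The plan is to reduce the general case $|K| = k > 2$ to the two-symbol case already handled by Lemma~\ref{lemma:ix}. The key observation is that the argument behind Lemma~\ref{lemma:t1t2} only ever uses the first two columns of the rotation matrix: the relative order of two occurrences of a fixed symbol $a$ in the last column $L$ is governed by $\pi_0 = \pid$ (the ordering of the symbols in the first column, i.e. the symbols following the two $a$'s), while their relative order in the first column $F$ is governed by $\pi_1 = \pi$ (the ordering applied to the second column, i.e. again the symbols following the two $a$'s). Since we may make the two test words agree after those two symbols — indeed in the Lemma's construction $s_1$ and $s_2$ can be taken of the form $a\,\alpha\,\gamma$ and $a\,\beta\,\gamma$ for a common suffix — the permutations $\pi_2,\ldots,\pi_{k-1}$ play no role whatsoever in distinguishing the $F$- and $L$-orderings of the two $a$'s.

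Concretely, I would argue as follows. Since $|\Sigma|\geq 3$ and $\pi \neq \pid$, $\pi\neq\pre$, by the same inductive observation used in the proof of Lemma~\ref{lemma:ix} there is a triplet $\{x,y,z\}$ with $x<y<z$ on which $\pi$ restricts to neither the identity nor the reverse, so there exist pairs $t_1=(x_1,y_1)$ and $t_2=(x_2,y_2)$ with $x_1<_{\pid}y_1$, $x_2<_{\pid}y_2$, $x_1<_\pi y_1$ and $x_2>_\pi y_2$; without loss of generality the triplet is $\{a,b,c\}$. Exactly as in Lemma~\ref{lemma:t1t2}, build two words $s_1$ and $s_2$ having the same Parikh vector, in each of which the letter $a$ occurs exactly twice, followed in $s_1$ by the two symbols of $t_1$ and in $s_2$ by the two symbols of $t_2$; if necessary pad both words with the same multiset of extra symbols so that they are genuinely words over $\Sigma$. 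Now sort the cyclic rotations of $s_i$ under $\preceq_K$. The two rows beginning with $a$ have, as their next symbols, the two members of $t_i$; these sit in the second column, which is sorted by $\pi = \pi_{1\bmod k}$, so the relative order of the two $a$'s in $F_i$ is their $\pi$-order, which is the same for $i=1$ but opposite for $i=2$. Meanwhile the two $a$'s in $L_i$ are ordered by the first column (sorted by $\pid$), i.e. by the $\pid$-order of the members of $t_i$, which is $a_1,a_2$ in both cases.

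The conclusion then follows verbatim from the end of the proof of Lemma~\ref{lemma:t1t2}: if $\BWT_K$ were rank-invertible there would be a single function $f_K$ with $\LF(i) = f_K(P_L, L[i], \rank_{L[i]}(L,i))$, but since $P_L$, the symbol $a$, and the ranks $1,2$ of the two $a$'s in $L$ are identical for $s_1$ and $s_2$ while the $\LF$-values differ (they point to the two $a$-positions of $F$, which occur in opposite orders), no such $f_K$ can exist. Hence $\BWT_K$ is not rank-invertible, which is the contrapositive of the corollary.

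I expect the only real point requiring care is the bookkeeping that ensures the test words can be chosen so that the columns beyond the second are irrelevant — i.e. verifying that one may realize any prescribed pair $t_i$ of follow-symbols for the two $a$'s while keeping $P_{s_1} = P_{s_2}$ and, if desired, the remaining suffixes equal. This is precisely the construction already carried out in Lemma~\ref{lemma:t1t2} (with $s_1 = aabcc$, $s_2 = abacc$ as the worked example), so the corollary is essentially a matter of observing that that construction never touched $\pi_2,\ldots,\pi_{k-1}$ and that padding with a common multiset of extra letters preserves all the needed invariants. No new ideas beyond Lemmas~\ref{lemma:t1t2} and~\ref{lemma:ix} are required.
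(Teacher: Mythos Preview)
Your proposal is correct and follows the same approach as the paper, which dispatches the corollary in a single sentence: the construction of Lemma~\ref{lemma:ix} (via Lemma~\ref{lemma:t1t2}) goes through unchanged because the relevant rotations already differ at position~$1$, so $\pi_2,\ldots,\pi_{k-1}$ are never consulted. Your aside about making the test words ``agree after those two symbols'' via a common suffix $\gamma$ is unnecessary and slightly beside the point --- what matters is not agreement between $s_1$ and $s_2$, but that \emph{within each} $s_i$ the two rotations beginning with $a$ differ at column~$1$ and the two rotations ending with $a$ differ at column~$0$; once you note that, no further bookkeeping is required.
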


\begin{proof}
We reason as in the proof of Lemma~\ref{lemma:ix}, observing that the presence of the permutations $\pi_2, \ldots,\pi_{k-1}$ has no influence on the proof since the row ordering is determined by the first two symbols of each rotation.\qed
\end{proof}

The following three lemmas establish necessary conditions on the structure of the tuple $K$ for $\BWT_K$ to be rank-invertible. In particular, the following lemma shows that $\BWT_K$ is not rank-invertible if $K$ contains anywhere a triplet $(\pid,\pid,\pi)$ with $\pi\neq\pid$.

\begin{lemma}\label{lem:*ii*}
Let $|\Sigma|\geq 2$ and $K=(\pid,\pi_1,\ldots,\pi_{i-1},\pid,\pid,\pi, \pi_{i+3},\ldots,\pi_{k-1})$, $i\geq 0$, with $\pi\neq\pid$. 
Then $\BWT_K$ is not rank-invertible.
\end{lemma}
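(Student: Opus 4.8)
The plan is to follow the scheme of the proof of Lemma~\ref{lemma:t1t2}: I would produce two primitive words $s_1,s_2$ with the same Parikh vector that force any putative function $f_K$ to return two different values. The key is a one‑position phase shift between the columns $L$ and $F$. Take two occurrences of a letter $\tau$ in a word and let $d$ be the first position (counted from $0$) at which the cyclic ``contexts'' following these two occurrences differ; write $\alpha,\beta$ for the two differing letters. In the rotations \emph{ending} at $\tau$ these letters sit at position $d$, so the relative order of the two occurrences in column $L$ is the $\pi_{d\bmod k}$‑order of $\alpha,\beta$; in the rotations \emph{starting} at $\tau$ position $0$ is the common letter $\tau$ and $\alpha,\beta$ are shifted to position $d+1$, so their relative order in column $F$ is the $\pi_{(d+1)\bmod k}$‑order of $\alpha,\beta$. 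Since $\pi_i=\pi_{i+1}=\pid$, a disagreement at position $d=i$ yields the \emph{same} order in $L$ and in $F$, while a disagreement at position $d=i+1$ yields the $\pid$‑order in $L$ and the $\pi$‑order in $F$; if $\alpha,\beta$ are chosen to be a pair that $\pi$ reverses, these two orders are opposite. Such a pair $c_p<_{\pid}c_q$ with $c_p>_{\pi}c_q$ exists precisely because $\pi\neq\pid$ (for $|\Sigma|=2$ it is forced, with $\pi=\pre$).

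Concretely, I would fix any $m\ge i+2$ and put $s_1=c_p\,c_q^{i}\,c_p\,c_q^{m+1}$ and $s_2=c_p\,c_q^{i+1}\,c_p\,c_q^{m}$. Both words contain exactly two $c_p$'s and $(i+1)+m$ copies of $c_q$, so they have the same Parikh vector, and both are primitive since in each the two $c_p$'s are separated by two runs of $c_q$ of different lengths. A direct check shows that in $s_1$ the contexts of the two $c_p$'s agree on their first $i$ letters and first differ at position $i$, the differing letters being $c_p$ and $c_q$, while in $s_2$ they agree on their first $i+1$ letters and first differ at position $i+1$, again with differing letters $c_p$ and $c_q$; here the assumption $m\ge i+2$ is exactly what makes the run of $c_q$ following the second $c_p$ long enough that this disagreement occurs at position $i+1$ and not earlier. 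Tracking the occurrence of $c_p$ at position $0$, the phase‑shift observation gives that it has $L$‑rank $1$ in both words (position $i$, resp.\ $i+1$, is compared via $\pi_i=\pid$, resp.\ $\pi_{i+1}=\pid$, and $c_p<_{\pid}c_q$), that it has $F$‑rank $1$ in $s_1$ (position $i$ shifts to $i+1$, compared via $\pi_{i+1}=\pid$), but that it has $F$‑rank $2$ in $s_2$ (position $i+1$ shifts to $i+2$, compared via $\pi_{i+2}=\pi$, and $c_q<_{\pi}c_p$).

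To conclude, by Proposition~\ref{p-fondBWT}(\ref{proper0}) the $c_p$'s occupy a fixed contiguous block of column $F$ whose starting position is determined by the common Parikh vector, so $\LF$ of the position of the first $c_p$ in $L$ equals the start of that block plus (its $F$‑rank) $-1$. Hence a rank‑inverting $f_K$ would have to satisfy $f_K(P,c_p,1)=(\text{start of the block})$ for $s_1$ and $f_K(P,c_p,1)=(\text{start of the block})+1$ for $s_2$, which is impossible, so $\BWT_K$ is not rank‑invertible.

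The step I expect to be the most delicate is the ``direct check'' above: one must verify that for each of $s_1,s_2$ the contexts of the two $c_p$'s first disagree exactly at the claimed position (this is where $m\ge i+2$ and the precise run lengths enter), that all the earlier position‑by‑position comparisons are genuine ties and are therefore insensitive to the unconstrained permutations $\pi_1,\dots,\pi_{i-1},\pi_{i+3},\dots,\pi_{k-1}$, and that the resulting $L$‑ and $F$‑ranks of the tracked occurrence are as asserted. This is essentially the rotation‑matrix bookkeeping carried out for Lemma~\ref{lemma:t1t2} in Figure~\ref{fig:s1s2}, only lengthened by the filler runs used to push the disagreement out to positions $i$ and $i+1$.
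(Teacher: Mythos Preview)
Your proposal is correct and follows essentially the same approach as the paper: exhibit two primitive words with the same Parikh vector in which the two tracked occurrences of a letter have identical $L$-rank but different $F$-rank, forcing $f_K$ to take two values on the same input. Your words $s_1=c_p c_q^{\,i} c_p c_q^{\,m+1}$ and $s_2=c_p c_q^{\,i+1} c_p c_q^{\,m}$ specialize (for $m=i+2$) exactly to the paper's binary-alphabet words $a b^i a b^{i+3}$ and $a b^{i+1} a b^{i+2}$; the paper then treats $|\Sigma|\ge 3$ with a separate three-letter construction, whereas you handle all alphabet sizes uniformly with the same two-letter words, which is a small but genuine simplification. The phase-shift heuristic you give (disagreement at column $d$ in $L$ becomes column $d+1$ in $F$) is exactly the mechanism underlying both proofs.
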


\begin{proof}
Note that when $i=0$, the $k$-tuple $K$ starts with the triplet $(\pid,\pid,\pi)$. We first analyze the case $|\Sigma|=2$ implying  that $\pi=\pre$. Let us consider the words
$s_1=a_1b^ia_2b^{i+1}bb$ and $s_2=a_1b^{i+1}a_2b^{i+1}b$, where we use subscripts to distinguish the two different occurrences of the symbol $a$. It is easy to see that, in the cyclic rotations matrix for $s_1$, $a_1$ precedes $a_2$ in both the first and the last column. 
Hence if $\BWT_K$ were rank-invertible we should have
$$
f_K(\langle 2,2i+3\rangle,a,1) = 1,\qquad
f_K(\langle 2,2i+3\rangle,a,2) = 2.
$$
At the same time, in the cyclic rotations matrix for $s_2$, $a_1$ precedes $a_2$ in the last columns, but in the first column $a_2$ precedes $a_1$ since the two rotations prefixed by $a$ differ in the third column and $b <_{\pre} a$. Therefore we should have
$$
f_K(\langle 2,2i+3\rangle,a,1) = 2,\qquad
f_K(\langle 2,2i+3\rangle,a,2) = 1.
$$
Hence $\BWT_K$ cannot be rank-invertible.

Let us consider the case $|\Sigma|\geq 3$. Since $\pi\neq\pid$ there are two symbols, say $b$ and $c$, such that their relative order according to $\pi$ is reversed, that is, $b<c$ and $c <_\pi b$. Consider now the words $s_1= a_1c^iba_2c^iccc$ and $s_2=a_1c^{i+1}ba_2c^{i+1}c$  where we use subscripts to distinguish the two different occurrences of the symbol $a$. It is immediate to see that, in the cyclic rotations matrix for $s_1$, $a_1$ precedes $a_2$ in both the first and the last column. 
Hence if $\BWT_K$ were rank-invertible we should have
$$
f_K(\langle 2,1,2i+3\rangle,a,1) = 1,\qquad
f_K(\langle 2,1,2i+3\rangle,a,2) = 2.
$$
At the same time, in the cyclic rotations matrix for $s_2$, $a_1$ precedes $a_2$ in the last columns, but in the first column $a_2$ precedes $a_1$ since the two rotations prefixed by $a$ differ in the $(i+3)$-th column and $c <_\pi b$. Hence we should have
$$
f_K(\langle 2,1,2i+3\rangle,a,1) = 2,\qquad
f_K(\langle 2,1,2i+3\rangle,a,2) = 1
$$
hence $\BWT_K$ cannot be rank-invertible.\qed
\end{proof}

The following lemma shows that $\BWT_K$ is not rank-invertible if $K$ contains anywhere a triplet $(\pid,\pre,\pi)$, with $\pi\neq\pid$.

\begin{lemma}\label{lem:*ir*}
Let $|\Sigma|\geq 2$ and $K=(\pid,\pi_1,\ldots,\pi_{i-1},\pid,\pre,\pi, \pi_{i+3},\ldots,\pi_{k-1})$, $i\geq 0$, with $\pi\neq\pid$. Then $\BWT_K$ is not rank-invertible.
\end{lemma}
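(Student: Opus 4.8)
The plan is to proceed exactly as in the proof of Lemma~\ref{lem:*ii*}: I would exhibit two primitive words $s_1$ and $s_2$ over $\Sigma$ with the same Parikh vector, each containing exactly two occurrences of one symbol $a$ (denoted $a_1$ and $a_2$), such that in the two sorted rotation matrices $M_K(s_1)$ and $M_K(s_2)$ the symbols $a_1,a_2$ appear in the last column $L$ in the \emph{same} relative order but in the first column $F$ in the \emph{opposite} relative order. Since $s_1$ and $s_2$ share their Parikh vector, the block of rows prefixed by $a$ sits at the same positions in both matrices, so no function $f_K$ can return the correct $LF$-value on the first occurrence of $a$ in $L$ for both words, and $\BWT_K$ cannot be rank-invertible.

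Concretely, I would reuse the very words used in the proof of Lemma~\ref{lem:*ii*}. If $|\Sigma|=2$, then necessarily $\pi=\pre$, and I take $s_1=a_1b^ia_2b^{i+3}$ and $s_2=a_1b^{i+1}a_2b^{i+2}$. If $|\Sigma|\ge 3$, I first use the hypothesis $\pi\ne\pid$ to fix two symbols $b<c$ with $c<_\pi b$, and then take $s_1=a_1c^iba_2c^{i+3}$ and $s_2=a_1c^{i+1}ba_2c^{i+2}$. In each case both words are primitive (the two $a$'s break any would-be period) and share the Parikh vector $\langle 2,2i+3\rangle$, respectively $\langle 2,1,2i+3\rangle$.

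The substance of the proof is then the same bookkeeping as in Lemma~\ref{lem:*ii*} — a column-by-column comparison of the two rotations prefixed by $a$ and of the two rotations ending in $a$, in each of the two matrices — but now keeping in mind that column $i+1$ carries the permutation $\pre$ instead of $\pid$. For $s_1$, the two rotations ending in $a$ agree on an initial run of $c$ (resp.\ $b$) and first differ in column $i$, where the permutation is $\pi_i=\pid$, so $a_1$ precedes $a_2$ in $L$; the two rotations prefixed by $a$ first differ in column $i+1$, where the permutation is $\pi_{i+1}=\pre$, so the standard order is reversed and $a_2$ precedes $a_1$ in $F$. For $s_2$, the two rotations ending in $a$ first differ in column $i+1$ (permutation $\pre$), giving $a_2$ before $a_1$ in $L$; the two rotations prefixed by $a$ first differ in column $i+2$, where the permutation $\pi_{i+2}=\pi$ reverses the pair $b,c$ by the choice made above, giving $a_2$ before $a_1$ in $F$. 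Reading these four facts off exactly as in Lemma~\ref{lem:*ii*}, the value of $f_K$ on the common Parikh vector, the symbol $a$, and rank $1$ would have to equal the index of the second $a$-prefixed row (because of $s_1$) and also the index of the first $a$-prefixed row (because of $s_2$); since these indices differ, this is impossible, which proves the claim.

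The only point requiring real care — and the main thing I would double-check — is that for each of the four rotation pairs the first mismatch really does occur in column $i$, $i+1$ or $i+2$ (and not earlier, inside the runs of $c$, resp.\ $b$), and that these column indices are strictly smaller than $k$, so that the permutations actually applied there are genuinely $\pi_i=\pid$, $\pi_{i+1}=\pre$ and $\pi_{i+2}=\pi$, with no cyclic wrap-around. The assumed shape of $K$ gives $i+2\le k-1$, which is exactly what is needed, and the lengths of $s_1$ and $s_2$ are comfortably larger than $i+2$. Everything else is identical to the proof of Lemma~\ref{lem:*ii*}, except that the roles of the ``reversed'' and ``non-reversed'' matrices are now played by $s_1$ and $s_2$ in the opposite way, because of the extra reversal introduced by $\pre$ in column $i+1$.
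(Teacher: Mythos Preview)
Your proposal is correct and follows essentially the same approach as the paper: the paper reuses precisely the words $s_1=a_1b^ia_2b^{i+3}$, $s_2=a_1b^{i+1}a_2b^{i+2}$ (binary case) and $s_1=a_1c^iba_2c^{i+3}$, $s_2=a_1c^{i+1}ba_2c^{i+2}$ (general case) from Lemma~\ref{lem:*ii*}, and observes that with $\pre$ now in column $i+1$ the relative orders of $a_1,a_2$ in $F$ and $L$ come out exactly as you describe. Your write-up is in fact more explicit than the paper's, which merely states the resulting orders and refers back to Lemma~\ref{lem:*ii*} for the mechanism.
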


\begin{proof}
As in the proof of Lemma~\ref{lem:*ii*}, we can consider the words $s_1=a_1b^ia_2b^{i+1}bb$ and $s_2=a_1b^{i+1}a_2b^{i+1}b$ in case of binary alphabet, and the words $s_1= a_1c^iba_2c^iccc$ and $s_2=a_1c^{i+1}ba_2c^{i+1}c$ in the general case by assuming that there are two symbols, say $b$ and $c$, such that their relative order according to $\pi$ is reversed, that is, $b<c$ and $c <_\pi b$.  Recall that we use subscripts to distinguish the two different occurrences of the symbol $a$.
In the cyclic rotations matrix for $s_1$, in the first column $a_2$ precedes $a_1$ while in the last column $a_1$ precedes $a_2$. At the same time, in both the first and the last column of the cyclic rotations matrix for $s_2$, $a_2$ precedes $a_1$. Reasoning as in the proof of Lemma~\ref{lem:*ii*} we get that $\BWT_K$ cannot be rank-invertible.\qed
\end{proof}

The following lemma shows that $\BWT_K$ is not rank-invertible if $K$ contains anywhere a triplet $(\pre,\pid,\pi)$, with $\pi\neq\pre$.

\begin{lemma}\label{lem:*ri*}
Let $|\Sigma|\geq 2$ and $K=(\pid,\pi_1,\ldots,\pi_{i-1},\pre,\pid,\pi, \pi_{i+3},\ldots,\pi_{k-1})$, $i\geq 0$, with $\pi\neq\pre$. 
Then $\BWT_K$ is not rank-invertible.
\end{lemma}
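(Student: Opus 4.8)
The plan is to follow the pattern of the proofs of Lemmas~\ref{lem:*ii*} and~\ref{lem:*ir*}: I will exhibit two primitive words $s_1,s_2$ with a common Parikh vector, each carrying exactly two occurrences of one distinguished symbol, so that the correspondence between those two occurrences in the last column $L$ and in the first column $F$ of the rotation matrix is a transposition for one word and the identity for the other. Since rank-invertibility forces this correspondence to be a function of the Parikh vector alone, this yields a contradiction. Write the triplet as occurring at positions $i,i+1,i+2$ (indices taken modulo $k$), so that $\pi_{i\bmod k}=\pre$, $\pi_{(i+1)\bmod k}=\pid$ and $\pi_{(i+2)\bmod k}=\pi$ with $\pi\neq\pre$. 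The engine is the observation already used in those two lemmas: if the two cyclic rotations beginning at the two occurrences of the distinguished symbol first differ at column position $m$, then their order of appearance in $F$ is governed by $\pi_{m\bmod k}$ and their order in $L$ by $\pi_{(m-1)\bmod k}$, in both cases comparing the very same pair of symbols.

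For the binary alphabet, $\pi=\pid$ and I reuse the words $s_1=a_1b^ia_2b^{i+1}bb$ and $s_2=a_1b^{i+1}a_2b^{i+1}b$ from the proof of Lemma~\ref{lem:*ii*}; subscripts name the two occurrences of $a$, and both words have Parikh vector $\langle 2,2i+3\rangle$. In $s_1$ the two $a$-rotations first differ at position $i+1$, where $a$ follows $a_1$ and $b$ follows $a_2$; hence the order of the two $a$'s in $L$ is governed by $\pi_i=\pre$ (putting $a_2$ before $a_1$) and their order in $F$ by $\pi_{i+1}=\pid$ (putting $a_1$ before $a_2$). In $s_2$ the first difference moves to position $i+2$, so the order of the two $a$'s in $L$ is governed by $\pi_{i+1}=\pid$ and in $F$ by $\pi_{i+2}=\pi=\pid$, in both cases putting $a_1$ before $a_2$. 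Thus rank-invertibility would force $f_K(\langle 2,2i+3\rangle,a,1)=2$ and $f_K(\langle 2,2i+3\rangle,a,2)=1$ because of $s_1$, but $f_K(\langle 2,2i+3\rangle,a,1)=1$ and $f_K(\langle 2,2i+3\rangle,a,2)=2$ because of $s_2$, a contradiction.

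For $|\Sigma|\geq 3$ the one point that genuinely departs from the proof of Lemma~\ref{lem:*ir*} is the choice of letters: there one needs a pair of non-distinguished symbols whose $\pi$-order is reversed, whereas here one needs a pair whose $\pi$-order agrees with the standard order, and such a pair is not guaranteed to avoid a fixed distinguished symbol. I therefore choose the letters freely: since $\pi\neq\pre$ there is a pair $x<y$ with $x<_\pi y$, and since $|\Sigma|\geq 3$ one can pick a third symbol $r\notin\{x,y\}$. Put $s_1=r_1y^ixr_2y^{i+3}$ and $s_2=r_1y^{i+1}xr_2y^{i+1}y$, with subscripts naming the two occurrences of $r$; both words are primitive (each contains a single $x$) and share the Parikh vector with two $r$'s, one $x$ and $2i+3$ copies of $y$. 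In $s_1$ the two $r$-rotations first differ at position $i+1$, where $x$ follows $r_1$ and $y$ follows $r_2$: the order of the two $r$'s in $L$ is governed by $\pi_i=\pre$ and, since $x<y$, puts $r_2$ before $r_1$, while their order in $F$ is governed by $\pi_{i+1}=\pid$ and puts $r_1$ before $r_2$. In $s_2$ the first difference moves to position $i+2$: the order in $L$ is governed by $\pi_{i+1}=\pid$ and puts $r_1$ before $r_2$, and the order in $F$ is governed by $\pi_{i+2}=\pi$ and, since $x<_\pi y$, also puts $r_1$ before $r_2$. Hence the occurrence correspondence is a transposition for $s_1$ and the identity for $s_2$, and we conclude exactly as in the binary case. (If one wants $s_1,s_2$ to range over all of $\Sigma$, append the same block of the remaining letters to both words, which leaves the position of first disagreement and the symbols there unchanged, precisely as in the proofs of Lemmas~\ref{lemma:t1t2} and~\ref{lem:*ii*}.)

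I expect the only real obstacle to be the one just flagged: the naive recycling of the earlier $\{a,b,c\}$-words breaks for some permutations (for instance $\pi=(c,a,b)$ on three letters, where $\pi$ agrees with the standard order only on a pair containing $a$), so one must let the distinguished symbol be selected after the comparing pair has been fixed. The remaining work --- verifying primitivity, checking that the two Parikh vectors coincide, and reading off the relative positions of the two marked occurrences in the first and last columns --- is the same routine bookkeeping performed in the preceding lemmas.
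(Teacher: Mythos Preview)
Your proof is correct and follows the same approach as the paper, which tersely says to reason as in Lemma~\ref{lem:*ir*} reusing the very same words $s_1=ab^iab^{i+1}bb$, $s_2=ab^{i+1}ab^{i+1}b$ (binary case) and $s_1= ac^ibac^iccc$, $s_2=ac^{i+1}bac^{i+1}c$ (general case). Your explicit treatment of the symbol selection for $|\Sigma|\geq 3$---fixing first a pair $x<y$ with $x<_\pi y$ and only then choosing the marker $r\notin\{x,y\}$---is a useful clarification, but it is not a genuine departure: in the paper the letters $a,b,c$ are already to be read as \emph{names} for three suitably chosen symbols (``there are two symbols, say $b$ and $c$, such that\ldots''), not as literal alphabet positions, so the concern you flag about $\pi=(c,a,b)$ is handled by the same relabelling you perform.
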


\begin{proof}
We reason as in the proof of Lemma~\ref{lem:*ir*} considering again the words  $s_1=ab^iab^{i+1}bb$ and $s_2=ab^{i+1}ab^{i+1}b$ in case of binary alphabet and the words $s_1= ac^ibac^iccc$ and $s_2=ac^{i+1}bac^{i+1}c$ in the general case.\qed
\end{proof}

We are now ready to establish the main result of this section.

\begin{theorem}\label{theo:mainRI}
If $|\Sigma|\geq 2$, $\BWT$ and $\ABWT$ are the only transformations $\BWT_K$ which are rank invertible.
\end{theorem}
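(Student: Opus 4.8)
The plan is to establish the statement in both directions. One direction is already in hand: $\BWT$ and $\ABWT$ are the transformations $\BWT_{(\pid)}$ and $\BWT_{(\pid,\pre)}$, and the $LF$-formulas~\eqref{eq:bwtlf} and~\eqref{eq:abwtlf} exhibit for each of them a function $f_K$ of the required form, so both are rank-invertible. The real content is the converse: if $\BWT_K$ is rank-invertible then $\preceq_K$ equals $\plex$ (so $\BWT_K=\BWT$) or $\palt$ (so $\BWT_K=\ABWT$). Before starting I would make two reductions. Since $\BWT_K$ depends on $K$ only through the order $\preceq_K$, and $\preceq_{K}=\preceq_{K\cdot K}$ (both use the permutation $\pi_{m\bmod k}$ at position $m$), I may assume $k=|K|$ is even. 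And, keeping the standing convention $\pi_0=\pid$, I claim $\pi_1\in\{\pid,\pre\}$: this is automatic when $|\Sigma|=2$, and is precisely Corollary~\ref{cor:ix*} when $|\Sigma|\ge 3$. Two cases then remain.

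In the case $\pi_1=\pid$, I would prove that $\pi_j=\pid$ for every $j$, so that $\preceq_K=\plex$ and $\BWT_K=\BWT$. If not, let $j$ be the least index in $\{0,\dots,k-1\}$ with $\pi_j\neq\pid$. Since $\pi_0=\pi_1=\pid$ we have $2\le j\le k-1$, hence $\pi_{j-2}=\pi_{j-1}=\pid$ and $K$ contains the triple $(\pid,\pid,\pi_j)$ at positions $j-2,j-1,j$ with $\pi_j\neq\pid$. By Lemma~\ref{lem:*ii*}, $\BWT_K$ is then not rank-invertible, a contradiction.

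In the case $\pi_1=\pre$, I would prove that $\pi_j=\pid$ for even $j$ and $\pi_j=\pre$ for odd $j$; since $k$ is even — whence position $m$ of a rotation and $m\bmod k$ always have the same parity — this forces $\preceq_K=\palt$ and $\BWT_K=\ABWT$. If the pattern fails, let $j$ be the least index in $\{0,\dots,k-1\}$ at which it does. As $\pi_0=\pid$ and $\pi_1=\pre$ obey the pattern, $2\le j\le k-1$, and $\pi_{j-2},\pi_{j-1}$ obey it by minimality. If $j$ is even then $(\pi_{j-2},\pi_{j-1},\pi_j)=(\pid,\pre,\pi_j)$ with $\pi_j\neq\pid$, and Lemma~\ref{lem:*ir*} gives a contradiction; if $j$ is odd then $(\pi_{j-2},\pi_{j-1},\pi_j)=(\pre,\pid,\pi_j)$ with $\pi_j\neq\pre$, and Lemma~\ref{lem:*ri*} gives a contradiction. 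Hence the pattern holds and $\BWT_K=\ABWT$.

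The step I expect to need the most care is the reduction to even $k$. Lemmas~\ref{lem:*ii*}, \ref{lem:*ir*} and~\ref{lem:*ri*} are stated for a triple $(\pi_i,\pi_{i+1},\pi_{i+2})$ sitting literally inside $K$ (with $0\le i\le k-3$), whereas $\preceq_K$, and hence the transformation, depends on the \emph{periodic} sequence $\pi_0,\dots,\pi_{k-1},\pi_0,\dots$; a \vir{wrap-around} triple such as $(\pi_{k-1},\pi_0,\pi_1)$ — which is exactly what obstructs an alternating tuple of odd length, for instance — lies outside the window those lemmas address. Passing from $K$ to $K\cdot K$ leaves $\BWT_K$ unchanged but turns every such triple into one genuinely inside the (now even-length) tuple, and simultaneously repairs the parity mismatch between positions and residues used in the $\pi_1=\pre$ case. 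After this normalization the two case analyses above are each a short minimality argument powered by the triple-lemmas, with Corollary~\ref{cor:ix*} needed only for the initial step when $|\Sigma|\ge 3$, and the theorem follows.
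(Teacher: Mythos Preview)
Your proof is correct and follows essentially the same route as the paper: pin down $\pi_1\in\{\pid,\pre\}$ via Corollary~\ref{cor:ix*} (trivially in the binary case), then propagate the pattern through $K$ using the triple-lemmas~\ref{lem:*ii*}, \ref{lem:*ir*}, \ref{lem:*ri*}. Your explicit reduction to even $k$ via $K\mapsto K\cdot K$ is actually more careful than the paper's presentation. The paper simply writes ``by iterating the same reasoning we can conclude that $\BWT_K$ coincides with $\ABWT$'', but as stated the triple-lemmas only apply to triples $(\pi_i,\pi_{i+1},\pi_{i+2})$ with $i\le k-3$; the wrap-around triple $(\pi_{k-1},\pi_0,\pi_1)$ is never examined, and for odd $k$ the iteration terminates at $K=(\pid,\pre,\pid,\dots,\pre,\pid)$, which does \emph{not} define $\palt$. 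Your doubling trick closes this gap cleanly, turning every cyclic triple into one sitting literally inside the tuple and aligning the parity of positions with that of residues.
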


\begin{proof}
For $|K|=2$, the result follows from Lemma~\ref{lemma:ix}. 
Let us suppose $K=(\pid,\pi_1,\ldots,\pi_{k-1})$ with $k>2$ and assume $\BWT_K$ is rank invertible. Both in the case of binary alphabet and in the general case, by Corollary \ref{cor:ix*}, we must have $\pi_1 = \pid$ or $\pi_1=\pre$. 
If $\pi_1 = \pid$ and $\BWT_K\neq \BWT$ then the $k$-tuple $K$ must contain the triplet $(\pid,\pid,\pi)$ with $\pi\neq \pid$ which is impossible by Lemma \ref{lem:*ii*}. If $\pi_1=\pre$, by Lemma \ref{lem:*ir*} $\pi_2=\pid$. We have therefore established that $K$ has the form $K=(\pid,\pre,\pid, \pi_3, \ldots,\pi_{k-1})$. By Lemma \ref{lem:*ri*} it is $\pi_3=\pre$. By iterating the same reasoning we can conclude that $\BWT_k$ coincides with $\ABWT$, concluding the proof.\qed
\end{proof}

\section{Efficient computation of the ABWT}

The bottleneck for the computation of $\ABWT$ (as well as of any transformation $BWT_K$) of a given string $w$ is the $\palt$-based (the $\preceq_K$-based) sorting of its cyclic rotations. In Remark~\ref{rem:time} we have observed that, 
if a unique end-of-string symbol, which is smaller than any other symbol in the alphabet, is appended to the input string,
all transformations in the class $BWT_K$ can be computed in linear time by first building the suffix tree for the input string. However, for computing the $\BWT$ this strategy has never been used in practice. The reason is that the algorithms for building the suffix tree, although they take linear time, have a large multiplicative constant and are not fast in practice. In addition, the suffix tree itself requires a space of about ten/fifteen times the size of the input which is a huge amount of temporary space that is not necessarily available (considering also that saving space is the primary reason for using the $\BWT$). For the above reasons the $\BWT$ is usually computed by first building the Suffix Array~\cite{Karkkainen:2003,MF02j} which is the array giving the lexicographic order of all the suffixes of the input string.   

A fundamental result on Suffix Array construction is the technique in~\cite{Karkkainen:2006} that, using the concept of  {\em difference cover}, makes it possible to design efficient Suffix Array construction algorithms for different models of computation including RAM, External Memory, and Cache Oblivious. 

In this section, we show that this technique can be adapted to compute the $\ABWT$ within the same time bound of the $\BWT$. 

Firstly, in order to use the notion of suffix array for the computation of $\ABWT$ we need to extend the definition of alternating lexicographic order also for strings having different length. 

\begin{definition}\label{def:lalt}
Let $x=x_0x_1\ldots x_{s-1}$ and $y=y_0y_1\ldots y_{t-1}$ with $s<t$. 
\begin{enumerate}
    \item  If $x$ is not a prefix of $y$ and $i$ is the smallest index in which $x_i\neq y_i$ Then, if $i$  is even $x\lalt y$ iff $x_i<y_i$. Otherwise, if $i$ is odd  $x\lalt y$ iff $x_i>y_i$. 
    \item If $x$ is a prefix of $y$, we say that $x\lalt y$ if $|x|$ is even, $y\lalt x$ if $|x|$ is odd.
\end{enumerate}
\end{definition}

Suffix array algorithms often assume that the input string ends with a unique end-of-string symbol smaller than any other in the alphabet $\Sigma$.
Remark that if we append the end-of-string symbol $\$$ to the string $w$, the $\lalt$-order relation between two suffixes of $w\$$ is determined by using Definition \ref{def:lalt} (case 1). Moreover, using the end-of-string symbol $\$$ implies that the $\palt$-based sorting of the cyclic rotations of input string is induced by the $\lalt$-based sorting of its suffixes. Note that this property does not hold in general. However, it is easy to verify that, apart from the symbol $\$$, the output $abwt(w\$)$ may be different from $abwt(w)$ and the number of equal letter runs can be greater (see Fig. \ref{fig:eofsymbol}). 

Here we assume that the input string $w$ contains a unique end-of-string symbol $\$$, but, in the next section, we show how to remove this hypothesis by using combinatorial properties of some special rotations of the input string. 

To illustrate the idea behind difference cover algorithms, in the following, given a positive integer $v$, we denote by $[0,v)$ the set $\{0, 1, \ldots, v-1\}$.

\begin{definition}\label{def:dc}
A set $D \subseteq [0,v)$ is a {\em difference cover} modulo $v$ if every integer in $[0,v)$ can be expressed as a difference, modulo $v$, of two elements of $D$, i.e.
$$
\{(i-j) \bmod v \mid i,j, \in D\} = [0,v).
$$\qed
\end{definition}


For example, for $v=7$ the set $\{0,1,3\}$ is a difference cover, since $0=0-0$, $1=1-0$, $2=3-1$, $3=3-0$, $4 = 0 - 3 \bmod 7$, and so on. An algorithm by Colbourn and Ling~\cite{ipl/ColbournL00} ensures that for any $v$ a difference cover modulo $v$ of size at most $\sqrt{1.5v} + 6$ can be computed in $O(\sqrt{v})$ time.  The suffix array construction algorithms described in~\cite{Karkkainen:2006} are based on the general strategy shown in Algorithm~\ref{algo:dc}. Steps~3 and~4 rely heavily on the following property of Difference covers: for any $0 \leq i,j< n$ there exists $k<v$ such that 
$(i+k) \bmod v \in D$ and $(j+k) \bmod v \in D$. This implies that to compare lexicographically suffixes $w[i,n-1]$ and $w[j,n-1]$ it suffices to compare at most $v$ symbols since $w[i+k,n-1] $ and $w[j+k,n-1]$ are both sampled suffixes and their relative order has been determined at Step~2.

\begin{algorithm}[t]
\algorithmicrequire{A string $w$ of length $n$ and a modulo-$v$ difference cover~$D$ }\\
\algorithmicensure{$w$'s suffixes in lexicographic order}
\begin{algorithmic}[1]
\State Consider the  $(n|D|)/v$ suffixes $w[i,n-1]$ starting at positions $i$ such that $i\bmod v \in D$. These suffixes are called the {\em sampled suffixes}. 
\State Recursively sort the sampled suffixes
\State Sort non-sampled suffixes
\State Merge sampled and non-sampled suffixes
\end{algorithmic}
\caption{{\sc Difference cover suffix sorting.}\label{algo:dc}}
\end{algorithm}

\def\startpos#1{\stackrel{#1}{ \cdot}\mkern-4mu}

To see how the algorithm works consider for example $v=6$, $D=\{0,1,3\}$ and the string $w=abaacabaacab\$$.  The sampled suffixes are those starting at positions $0,1,3,6,7,9,12$. To sort them, consider the string over $\Sigma^v$ whose elements are the $v$-tuples starting at the sampled positions in the order $0,6,12,1,7,3,9$:
$$
R[0,6] =\; \stackrel{w[0,5]}{abaaca}\;\,
\stackrel{w[6,11]}{baacab}\;\,
\stackrel{w[12,18]}{\$\$\$\$\$\$}\;\, 
\stackrel{w[1,6]}{baacab}\;\,
\stackrel{w[7,12]} {aacab\$}\;\, 
\stackrel{w[3,8]}{acabaa}\;\, 
\stackrel{w[9,14]}{cab\$\$\$}
$$
(note we have added additional \$'s to make sure all blocks contain $v$ symbols). The difference cover algorithm then renames each $v$-tuple with its lexicographic rank. Since
$$
\$\$\$\$\$\$ \plex aacab\$ \plex abaaca \plex  acabaa  
\plex baacab \plex cab\$\$\$
$$
the renamed string is $R_{bwt} = [2, 4, 0, 4, 1, 3, 5]$. The crucial observation is that the suffix array for $R_{bwt}$, which in our example is $SA(R_{bwt}) = [2,4,0,5,1,3,6]$, provides the lexicographic ordering of the sampled suffixes. Indeed $R[2] = w[12,18]$ is the smallest sampled suffix, followed by $R[4]=w[7,12]$, followed by $R[0] = w[0,5]$, and so on. The Suffix Array of $R_{bwt}$ is computed with a recursive call at Step~2, and is later used in Steps~3 and~4 to complete the sorting of all suffixes. 


To compute $\abwt(w)$ with the difference cover algorithm, we consider the same string $R$ but we sort the $v$-tuples according to the {\em alternating} lexicographic order. Since 
$$
\$\$\$\$\$\$ \palt acabaa \palt abaaca \palt aacab\$  
\palt baacab \palt cab\$\$\$
$$
it is $R_{abwt} = [2, 4, 0, 4, 3, 1, 5]$. Next, we compute the Suffix Array of $R_{abwt}$ according to the {\em standard} lexicographic order, $SA(R_{abwt}) = [2,5,0,4,1,3,6]$. We now show that, since $v=6$ is even, $SA(R_{abwt})$ provides the correct {\em alternating} lexicographic order of the sampled suffixes. 

To see this, assume $w[i,n-1]$ and $w[j,n-1]$ are sampled suffixes with a common prefix of length $\ell$. Hence $w[i,i+\ell-1]=w[j,j+\ell-1]$ while $w[i+\ell]\neq w[j+\ell]$. Let $R_{abwt}[t_i]$ and $R_{abwt}[t_j]$ denote the entries in $R_{abwt}$ corresponding to $w[i,i+v-1]$ and $w[j,j+v-1]$. By construction, the suffixes $R_{abwt}[t_i,r]$ and $R_{abwt}[t_j,r]$ have a common prefix of $\lfloor\ell/v\rfloor$ entries (each one corresponding to a length-$v$ block in~$w$) followed respectively by $R_{abwt}[t_i + \lfloor\ell/v\rfloor]$ and $R_{abwt}[t_j+\lfloor\ell/v\rfloor]$ which are different since they correspond to the $v$-tuples $R[t_i + \lfloor\ell/v\rfloor]$ and $R[t_j+\lfloor\ell/v\rfloor]$ which differ since they contain the symbols $w[i+\ell]$ and $w[j+\ell]$ in position $(\ell \bmod v)$. Assuming for example that $w[i+\ell] < w[j+\ell]$, it is $w[i,n-1] \lalt w[j,n-1]$ depending on whether $\ell$ is odd or even. Since $v$ is even, $\ell$ is even iff $\ell \bmod v$ is even, hence
\begin{align*}
w[i,n-1] \lalt w[j,n-1] 
&\Longleftrightarrow  
R[t_i + \lfloor\ell/v\rfloor] \palt R[t_j+\lfloor\ell/v\rfloor] \\
&\Longleftrightarrow 
R_{abwt}[t_i + \lfloor\ell/v\rfloor] < R_{abwt}[t_j+\lfloor\ell/v\rfloor]\\
&\Longleftrightarrow 
R_{abwt}[t_i, r] \plex R_{abwt}[t_j,r]
\end{align*}
which shows that the standard Suffix Array for $R_{abwt}$ provides the alternating lexicographic order of the sampled suffixes, as claimed. 

Summing up, after building the string $R_{abwt}$, 
at Step~2 we compute $SA(R_{abwt})$ using the standard Difference cover algorithm, or any other suffix sorting algorithm. Finally, Step~3 and~4 can be easily adapted to the alternating lexicographic order using its property that for any symbol $c\in\Sigma$ and strings $\alpha, \beta \in\Sigma^*$ it is
\begin{equation}\label{eq:reverse}
c\alpha \lalt c\beta \;\;  \Longleftrightarrow\;\; \beta \lalt \alpha.
\end{equation}
For example, to compare $w[0,12]$ with $w[5,12]$ we notice that after $w[0]=w[5]$ we reach the sampled suffixes $w[1,12]$ and $w[6,12]$ corresponding to $R[3,6]$ and $R[1,6]$. According to $SA(R_{abwt})$ it is $R[1,6] \plex R[3,6]$ which implies $w[6,12] \lalt w[1,12]$, and by~\eqref{eq:reverse} $w[0,12] \plex w[5,12]$. Since from the alternating lexicographic order of $w$'s suffixes $\abwt(w)$ can be computed in linear time, the results in~\cite{Karkkainen:2006} can be translated as follows.

\begin{theorem}
Given a string $w[0,n-1]$ ending with a unique end-of-string symbol, we can compute $\abwt(w)$ in RAM in $\Oh(n)$ time, or in $\Oh(n\log\log n)$ time but using only $n + o(n)$ words of working space. In external memory, using $D$ disks with block size $B$ and a fast memory of size $M$, $\abwt(w)$ can be computed in $\Oh(\frac{n}{DB}\log_{M/B} n/B)$ I/Os and $\Oh(n \log_{M/B} n/B)$ internal work.\qed
\end{theorem}

We point out that the above results cannot be easily extended to the generalized BWTs introduced in Section~\ref{sec:abwt}. The reason is that Step~3 and~4 of the modified Difference cover algorithm hinge on Property~\eqref{eq:reverse} that does not hold in general for the lexicographic orders introduced by Definition~\ref{def_perm}.

\section{Galois words and ABWT computation for arbitrary rotations}

Galois words, introduced in \cite{Reutenauer2006}, are generalization of Lyndon words for the alternating lexicographic order. Roughly speaking, a Galois word is the smallest cyclic rotation within its conjugacy class, with respect to $\palt$ order. Although, in general, Galois and Lyndon words are distinct within a conjugacy class, some properties that hold for Lyndon words are preserved. Some characterizations of Galois words by using infinite words and some properties of words that are obtained as a nonincreasing factorization in Galois words, are studied in~\cite{DOLCE_Reut_Rest2018}. 

In this section, we explore some combinatorial properties of Galois words and, in particular, we show a linear time and space algorithm to find the Galois rotation of a word. These results, on one hand, give an answer to a question posed in~\cite{DOLCE_Reut_Rest2018}. On the other hand, they allow to prove  that, for the computation of the $\ABWT$ of a string $w$, Galois words play a role similar to that of Lyndon words for $\BWT$ \cite{GIA07,BoMaReRoSc_IJFCS_2014}, i.e. the computation of $\ABWT$ can be linearly performed, even if no end-of-string symbol is appended to the input.

\ignore{
In the previous section, and in Remark~\ref{rem:time}, we have shown how to efficiently compute the $\ABWT$, when the input string ends with a special end-of-string symbol. Such a symbol considerably simplifies the algorithms since it establishes the position of the lexicographically smallest rotation. In this section we introduce an algorithm, linear in  both time and space, to compute $\abwt(w)$ for the general case in which the string $w$ does not contain such an end-of-string symbol.  In particular, we show that the $\palt$-based sorting of cyclic rotations of a particular conjugate of a string $w$, called {\em Galois word}, can be linearly performed. 

Galois words, introduced in \cite{Reutenauer2006}, are generalization of Lyndon words for the alternating lexicographic order. Roughly speaking, a Galois word is the smallest cyclic rotation within its conjugacy class, with respect to $\palt$ order. Although, in general, Galois and Lyndon words are distinct within a conjugacy class, some properties that hold for Lyndon words are preserved. Some characterizations of Galois words by using infinite words and some properties of words that are obtained as a nonincreasing factorization in Galois words, are studied in~\cite{DOLCE_Reut_Rest2018}. The results of this section, on the one hand, show that for the computation of $\ABWT$ of a string $w$, Galois words play a role similar to that of Lyndon words for $\BWT$ \cite{GIA07,BoMaReRoSc_IJFCS_2014}. On the other hand, they establish some premises to deal with some open questions posed in~\cite{DOLCE_Reut_Rest2018}.
}

\begin{definition}
A primitive word $w$ is a {\em Galois word} if for each nontrivial factorization $w = uv$, one has $w \palt vu$.\qed
\end{definition}

\begin{example}
The words $w=ababba$ and $v=aababb$ are, respectively,  the Galois word and the Lyndon word within the same conjugacy class. Another example is $w=ababaa$ and $v=aaabab$.\qed
\end{example}


Firstly, we show that a string $w$ is a Galois word if it is smaller than its proper suffixes, with respect to $\lalt$ order introduced in Definition \ref{def:lalt} (see Fig. \ref{fig:abwt_suffixes} for an example). 

The following result has been proved in \cite{Reutenauer2006} (Proposition 3.1). Here we report the proof by using our notation.
\begin{lemma}\label{lem: Galois_border}
If a Galois word $w$  has a border, then it has odd length.
\end{lemma}
\begin{proof}
Let $u$ be both suffix and prefix of $w$. This means that $w=uv'=v''u$. By definition, $w=uv'\lalt uv''$.  If $|u|$ would be even, then it should be $v'\lalt v''$. On the other hand, $w=v''u\lalt v'u$ implies that $v''\lalt v'$, a contradiction.\qed
\end{proof}

Lyndon words can be defined as the strings that are smaller of its proper suffixes.
Such a characterization also holds for Galois words, as shown in the following proposition.
A different proof of this result, involving infinite words, is given in~\cite{DOLCE_Reut_Rest2018}.
\begin{proposition}
A primitive word $w$ is a Galois word if and only if $w$ is smaller than any of its suffix, with respect to $\lalt$ order.
\end{proposition}
\begin{proof}
Let $w$ be a Galois word and let $v$ a suffix of $w$. This means that $w=uv$. If $v$ is also prefix of $w$, then by Lemma \ref{lem: Galois_border} $v$ has odd length, i.e. $w\lalt v$. If $v$ is not a prefix of $w$, then there exists $0\leq i<|v|-1$ such that $v_i\neq w_i$.   Since $w$ is a Galois word, $uv\lalt vu$. This fact implies that $w=uv\lalt v$.
Conversely, let $w=uv$. Since $uv\lalt v$, we can distinguish two cases, whether $v$ is prefix of $w$ or not. If $v$ is not a prefix of $w$ then $uv\lalt vu$. If $v$ is a prefix of $w$, then the length of $v$ is odd. Therefore if it would be $vu\lalt uv=vu'$ then $u'\lalt u$ that implies $u'\lalt uv$ that is a contradiction.\qed
\end{proof}

It is known that, when Lyndon words are considered, the lexicographic sorting of its suffixes induces the $\plex$-sorting of the conjugates. Such a result is useful to compute the $\BWT$ of a string without using any end-of-string symbol \cite{GIA07}.  The following proposition shows that this property also holds for Galois words. In fact, the $\palt$-based sorting of the cyclic rotations of a primitive Galois word can be reduced to the $\lalt$-based sorting of its suffixes. An example of this property is reported in Fig. \ref{fig:abwt_suffixes}.

\begin{figure}[ht]
{\small
$$\arraycolsep=2.5pt
\begin{array}{cccccc}
 a & b & a & b & b & a    \\
a & b & b & a & a & b    \\
a & a & b & a & b & b    \\
 b & b & a & a & b & a    \\
 b & a & a & b & a & b    \\
 b & a & b & b & a & a  \\
  &   &    &   &   &   \\
 \multicolumn{6}{c}{M_{\alt}(ababba)}
\end{array}
\qquad\qquad
\begin{array}{cccccc}
 a & b & a & b & b & a    \\
a & b & b & a &  &     \\
a &  &  &  &  &     \\
 b & b & a &  &  &     \\
 b & a &  &  &  &     \\
 b & a & b & b & a &   \\
  &   &    &   &   &   \\
 \multicolumn{6}{c}{Suf(ababba)}
\end{array}\qquad\qquad
\begin{array}{cccccc}
 a & b & a & b & b &     \\
a & b & b &  &  &     \\
a & a & b & a & b & b    \\
 b & b &  &  &  &     \\
 b & a & b & b &  &     \\
 b &  &  &  &  &   \\
  &   &    &   &   &   \\
 \multicolumn{6}{c}{Suf(aababb)}
\end{array}$$
}
\caption{Left: the matrix $M_{\alt}$ of all cyclic rotations of the word $ababba$. Center:  the $\lalt$-sorted suffixes of the Galois word $ababba$. Right: the $\lalt$-sorted suffixes of the Lyndon conjugate $aababb$. The $\lalt$-order of the last two suffixes is different from the $\palt$-order of the correspondent cyclic rotations.} \label{fig:abwt_suffixes}
\end{figure}


\begin{proposition}
Let $w$ be a primitive Galois word and let $u', u'', v', v''$ be factors of $w$ such that $w=u'v'=u''v''$. Then, $v'u'\lalt v''u''\iff v'\lalt v''$. 
\end{proposition}

\begin{proof}
Let us assume that $w'=v'u'\lalt w''=v''u''$. There exists $0\leq i<|w|$ such that $w'_i\neq w''_i$. Firstly, let us assume that $i<\min\{|v'|,|v''|\}$. In this case $v'\lalt v''$.  Let us assume now that $v'$ is a prefix of $v''$, i.e. $v''=v's$, for some non-empty string $s$. If $|v'|$ would be odd, then $v'u'\lalt v''u''=v'su''\Rightarrow su''\lalt u'\Rightarrow su''v'\lalt u'v'$, that is a contradiction. So, $|v'|$ is even and by definition $v'\lalt v''$. Let us consider the case $v''$ is a prefix of $v'$, i.e. $v'=v''t$, for some string $t$. If $|v''|$ would be even then $v''tu'=v'u'\lalt v''u''\Rightarrow tu'\lalt u''\Rightarrow tu'v''\lalt u''v''$, that is a contradiction. So, $|v''|$ is odd then, by definition, $v'\lalt v''$.

Conversely, let us suppose that   $v'\lalt v''$. If neither $v'$ is prefix of $v''$ nor $v''$ is prefix of $v'$, then $v'u'\lalt v''u''$. Let us suppose now that $v'$ is prefix of $v''$, i.e. $v''=v's$ then $v'$ has even length. Since $w$ is a Galois word, $u'v'\lalt su''v'\Rightarrow u'\lalt su'' \Rightarrow v'u'\lalt v'su''=v''u''$. Let us suppose now that $v''$ is prefix of $v'$, i.e. $v'=v''t$ then $v''$ has odd length. The fact that $w$ is a Galois word implies that $u''v''\lalt tu'v''\Rightarrow u''\lalt tu' \Rightarrow v''tu'=v'u'\lalt v''u''$. \qed
\end{proof}

It is known that the unique Lyndon conjugate of a string $w$ is one of the elements in the non-increasing factorization of $ww$ into Lyndon words~\cite{Duval83}. As proved in \cite{DOLCE_Reut_Rest2018}, this strategy does not work for Galois words. Hence, we introduce a new linear time and space algorithm, named {\sc FindGaloisRotation}, to find, for each primitive string $w$ of length $n$, its unique cyclic rotation that is a Galois word. Our algorithm is a variant of the one in~\cite{Booth80,Lothaire:2005} to find the Lyndon conjugate of a given string. The algorithm {\sc FindGaloisRotation} uses a \emph{border array} $B$ of length $n+1$ that stores in each position $j>0$ the length of the border of the $j$-length prefix of $w[k,(k+j-1)\bmod n]$, i.e. the Galois rotation starting at position $k$, and $B[0]=-1$. 

\begin{algorithm}[t]
\caption{{\sc FindGaloisRotation}}
\label{Alg:Galois}
   \algorithmicrequire{A primitive string $w$ of length $n$}\\
  \algorithmicensure{The starting position $0\leq k<n$ of the cyclic rotation of $w$ that is a Galois word}
    \begin{algorithmic}[1] 
            \State $i\gets 0$;  $j\gets 1$; $k\gets 0$;
            \State $B[0]\gets -1$;
            \While{$k+j<2n$} 
           \If {$j\leq n$}  $B[j] \gets i;$ 
	     \EndIf 
		\While {$i\geq  0$ \KwAnd $w[(k+j)\bmod n]\neq w[(k+i)\bmod n]$} 
		\If {$i\bmod2=0$} 
			\If{$w[(k+j)\bmod n]<w[(k+i)\bmod n]$} 
			\State $k\gets k+j-i$; $j\gets i$;
			\EndIf
             \Else 
			\If{$w[(k+j)\bmod n]>w[(k+i)\bmod n]$} 
			\State $k\gets k+j-i$; $j\gets i$;
			\EndIf
		\EndIf
		\State $i\gets B[i]$;
		\EndWhile
		\State $i\gets i+1$; $j\gets j+1$;\label{line:++}
            \EndWhile\label{mainwhile}
            \State \textbf{return} $k$
    \end{algorithmic}
\end{algorithm}

At each iteration of the main {\bf while} loop (lines 3--13), $k$ is the starting position of the current candidate for the smallest cyclic rotation (with respect to $\palt$ order), $w[k,(k+j-1)\bmod n]$ is a Galois word and $B[j]=i$ is the length of its border. This means that $w[k,(k+i-1)\bmod n]=w[(k+j-i),(k+j-1)\bmod n]$. So, the characters $w[(k+i)\bmod n]$ and $w[(k+j)\bmod n]$ are compared. 
If 
those
characters are equal, the length of the border is increased. 
If
they
are distinct, different alphabet orders are used depending on whether $i$ is even or not, and the value of $k$ is consequently updated. Note that, even if $k$ is changed, the computed value $B[j+1]$ is the same and the values $B[i]$, with $i\leq j$, do not need to be re-computed. 

\begin{theorem}
Given a primitive string $w$, its unique cyclic rotation that is a Galois word can be computed in linear time and space. 
\end{theorem}
\begin{proof}
We note that the auxiliary memory consists solely of the border array and that the execution time depends linearly on the number of comparisons between the characters in~$w$. To prove that {\sc FindGaloisRotation} requires at most $4n-3$ comparisons,  we consider the quantity $2(k+j)-i$ and show that it always increases after each comparison between the characters $w[(k+j)\bmod n]$ and $w[(k+i)\bmod n]$.  If the two characters are equal, then both $i$ and $j$ are increased by one at Line~\ref{line:++}. If the two characters are different, then the quantity $k+j$ remains unchanged and the value of $i$ is decreased. Finally, note that if $n\geq 2$, the quantity $2(k+j)-i$ is equal to $2$ for the first comparison and it is at most $2(2n-1)$, so the overall number of comparisons is at most $4n-3$ as claimed.\qed 
\end{proof}

\begin{figure}[ht]
{\small
$$\arraycolsep=2.5pt
\begin{array}{cccccc}
          a &          n &          a &          n &          a &          b \\
          a &          n &          a &          b &          a &          n \\
          a &          b &          a &          n &          a &          n \\
          b &          a &          n &          a &          n &          a \\
          n &          a &          b &          a &          n &          a \\
          n &          a &          n &          a &          b &          a \\
            &            &            &            &            &            \\
          \multicolumn{6}{c}{M_{\alt}(banana)}            \\ 
\end{array}
\qquad\qquad
\begin{array}{ccccccc}
          \$&          b &          a &          n &          a &          n &      a\\
          a &          n &          a &          n &          a &          \$&      b\\
          a &          n &          a &          \$&            b &          a &     n \\
          a &          \$&          b &          a &          n &          a &          n \\
          b &          a &          n &          a &          n &          a &      \$ \\
          n &          a &          \$&         b &          a &          n &          a \\
          n &          a &          n &          a &         \$&         b &          a \\
           &            &            &            &            &            \\
           \multicolumn{6}{c}{M_{\alt}(banana\$)}            \\ 
\end{array}
\qquad\qquad
\begin{array}{ccccccc}
          \$ &         a &          n &          a &          n &          a &          b \\
          a &          n &          a &          n &          a &          b &          \$\\
          a &          n &          a &          b &          \$&          a &          n \\
          a &          b &          \$ &         a &          n &          a &          n \\
          b &          \$ &         a &          n &          a &          n &          a \\
          n &          a &          b &          \$ &         a &          n &          a \\
          n &          a &          n &          a &          b &         \$ &          a \\
            &            &            &            &            &            \\
          \multicolumn{6}{c}{M_{\alt}(ananab\$)}            \\ 
\end{array}
$$
}
\caption{Left: the matrix $M_{\alt}$ of all cyclic rotations of the word $w = banana$, sorted by using $\palt$-order. The output is $abwt(w)=bnnaaa$. Center:  the matrix $M_{\alt}$ of the word $banana\$$. The output is $abwt(banana\$)=abnn\$aa$. Right: the matrix $M_{\alt}$ of the word $ananab\$$, where $ananab$ is the Galois conjugate of $w$. The output is $abwt(ananab\$)=b\$nnaaa$.} \label{fig:eofsymbol}
\end{figure}

The next corollary shows how to use {\sc FindGaloisRotation} procedure to compute in linear time the $\ABWT$ of an input string without using any end-of-string symbol. 

\begin{corollary}
The $\ABWT$ of a generic string $w$ can be computed in linear time.
\end{corollary}

\begin{proof}
We apply {\sc FindGaloisRotation} to $w$ to find its Galois conjugate $w'$. Then we apply to $w'\$$ the algorithm described in previous section. By using Remark \ref{rem:conjugacy}, we can deduce that $\abwt(w)$ can be obtained from $\abwt(w'\$)$ by just removing $\$$ from the output (see Fig. \ref{fig:eofsymbol} for an example).\qed
\end{proof}

In literature some improvements of the algorithms for finding the Lyndon conjugate have been proposed~\cite{Shiloach81}. It is open the question whether similar improvements can be found for Galois words, by reducing the number of comparison or the amount of auxiliary memory used by algorithm {\sc FindGaloisRotation}. Moreover, it would be interesting to investigate whether similar strategies can be applied to other generalized $\BWT$s.

\ignore{
\section{Computing ABWT for arbitrary rotations via Galois words}

We know that the sorting of its cyclic rotations is the bottleneck for the computation of $\BWT$ of a given string $w$ and that the sorting of the suffixes of $w$ is more efficient.

On the one hand, as it is well known for the $\BWT$, if two words are conjugates, the $\ABWT$ will have the same column $L$ and differ only in $I$, whose purpose is only to distinguish between the different members of the conjugacy class. Specifically, $I$ is not necessary in order to recover the matrix from the last column $L$. 

On the other hand, it is known that, when Lyndon words are considered, the lexicographic sorting of its suffixes induces the $\plex$-sorting of the conjugates. Such a result is useful, for instance, to compute the $\BWT$ of a string without using any end-of-string symbol \cite{GIA07} (see also \cite{BoMaReRoSc_IJFCS_2014}).  

So, in this section, we show that there exists a a particular conjugate of the input string $w$, called {\em Galois word}, that plays a role similar to that of Lyndon words for the construction of the $\BWT$, i.e. the $\palt$-based sorting of the cyclic rotations of the Galois word can be reduced to the $\lalt$-based sorting of its suffixes. An example of this property is reported in Fig. \ref{fig:abwt_suffixes}.
Indeed, Galois words, introduced in \cite{Reutenauer2006}, are a generalization of Lyndon words for the alternating lexicographic order. 
Roughly speaking, a Galois word is the smallest cyclic rotation within its conjugacy class, with respect to $\palt$ order.

So, in this section, we study some combinatorial properties of the Galois words, also establishing some premises to deal with some open questions posed in~\cite{DOLCE_Reut_Rest2018}, 
we show that can linearly build it and, in the next section, we exploit this word in order to compute the $\ABWT$ of a (circular) word in efficient way.


Although, in general, Galois and Lyndon words are distinct within a conjugacy class, some properties that hold for Lyndon words are preserved. Some characterizations of Galois words by using infinite words and some properties of words that are obtained as a nonincreasing factorization in Galois words, are studied in~\cite{DOLCE_Reut_Rest2018}. 

\begin{definition}
A primitive word $w$ is a {\em Galois word} if for each nontrivial factorization $w = uv$, one has $w \palt vu$.\qed
\end{definition}

\begin{example}
The words $w=ababba$ and $v=aababb$ are, respectively,  the Galois word and the Lyndon word within the same conjugacy class. Another example is $w=ababaa$ and $v=aaabab$.\qed
\end{example}


Firstly, we show that a string $w$ is a Galois word if it is smaller than its proper suffixes, with respect to $\lalt$ order (see Fig. \ref{fig:abwt_suffixes} for an example). 
For this purpose, we note that the alternating lexicographic order can be defined also for strings having different length.

\begin{definition}\label{def:lalt}
Let $x=x_0x_1\ldots x_{s-1}$ and $y=y_0y_1\ldots y_{t-1}$ with $s<t$. 
\begin{enumerate}
    \item  If $x$ is not a prefix of $y$ and $i$ is the smallest index in which $x_i\neq y_i$ Then, if $i$  is even $x\lalt y$ iff $x_i<y_i$. Otherwise, if $i$ is odd  $x\lalt y$ iff $x_i>y_i$. 
    \item If $x$ is a prefix of $y$, we say that $x\lalt y$ if $|x|$ is even, $y\lalt x$ if $|x|$ is odd.
\end{enumerate}
\end{definition}


\begin{figure}[t]
{\small
$$\arraycolsep=2.5pt
\begin{array}{cccccc}
 a & b & a & b & b & a    \\
a & b & b & a & a & b    \\
a & a & b & a & b & b    \\
 b & b & a & a & b & a    \\
 b & a & a & b & a & b    \\
 b & a & b & b & a & a  \\
  &   &    &   &   &   \\
 \multicolumn{6}{c}{M_{\alt}(ababba)}
\end{array}
\qquad\qquad
\begin{array}{cccccc}
 a & b & a & b & b & a    \\
a & b & b & a &  &     \\
a &  &  &  &  &     \\
 b & b & a &  &  &     \\
 b & a &  &  &  &     \\
 b & a & b & b & a &   \\
  &   &    &   &   &   \\
 \multicolumn{6}{c}{Suf(ababba)}
\end{array}\qquad\qquad
\begin{array}{cccccc}
 a & b & a & b & b &     \\
a & b & b &  &  &     \\
a & a & b & a & b & b    \\
 b & b &  &  &  &     \\
 b & a & b & b &  &     \\
 b &  &  &  &  &   \\
  &   &    &   &   &   \\
 \multicolumn{6}{c}{Suf(aababb)}
\end{array}$$
}
\caption{Left: the matrix $M_{\alt}$ of all cyclic rotations of the word $ababba$. Center:  the $\lalt$-sorted suffixes of the Galois word $ababba$. Right: the $\lalt$-sorted suffixes of the Lyndon conjugate $aababb$. The $\lalt$-order of the last two suffixes is different from the $\palt$-order of the correspondent cyclic rotations.} \label{fig:abwt_suffixes}
\end{figure}

The following result has been proved in \cite{Reutenauer2006} (Proposition 3.1). Here we report the proof by using our notation.
\begin{lemma}\label{lem: Galois_border}
If a Galois word $w$  has a border, then it has odd length.
\end{lemma}
\begin{proof}
Let $u$ be both suffix and prefix of $w$. This means that $w=uv'=v''u$. By definition, $w=uv'\lalt uv''$.  If $|u|$ would be even, then it should be $v'\lalt v''$. On the other hand, $w=v''u\lalt v'u$ implies that $v''\lalt v'$, a contradiction.\qed
\end{proof}

Lyndon words can be defined as the strings that are smaller of its proper suffixes.
Such a characterization also holds for Galois words, as shown in the following proposition.
A different proof of this result, involving infinite words, is given in~\cite{DOLCE_Reut_Rest2018}.
\begin{proposition}
A primitive word $w$ is a Galois word if and only if $w$ is smaller than any of its suffix, with respect to $\lalt$ order.
\end{proposition}
\begin{proof}
Let $w$ be a Galois word and let $v$ a suffix of $w$. This means that $w=uv$. If $v$ is also prefix of $w$, then by Lemma \ref{lem: Galois_border} $v$ has odd length, i.e. $w\lalt v$. If $v$ is not a prefix of $w$, then there exists $0\leq i<|v|-1$ such that $v_i\neq w_i$.   Since $w$ is a Galois word, $uv\lalt vu$. This fact implies that $w=uv\lalt v$.
Conversely, let $w=uv$. Since $uv\lalt v$, we can distinguish two cases, whether $v$ is prefix of $w$ or not. If $v$ is not a prefix of $w$ then $uv\lalt vu$. If $v$ is a prefix of $w$, then the length of $v$ is odd. Therefore if it would be $vu\lalt uv=vu'$ then $u'\lalt u$ that implies $u'\lalt uv$ that is a contradiction.\qed
\end{proof}



The following proposition states that three well known properties of the $\BWT$ hold, in a slightly modified form, for the $\ABWT$ as well.

\begin{proposition}
Let $w$ be a primitive Galois word and let $u', u'', v', v''$ be factors of $w$ such that $w=u'v'=u''v''$. Then, $v'u'\lalt v''u''\iff v'\lalt v''$. 
\end{proposition}

\begin{proof}
Let us assume that $w'=v'u'\lalt w''=v''u''$. There exists $0\leq i<|w|$ such that $w'_i\neq w''_i$. Firstly, let us assume that $i<\min\{|v'|,|v''|\}$. In this case $v'\lalt v''$.  Let us assume now that $v'$ is a prefix of $v''$, i.e. $v''=v's$, for some non-empty string $s$. If $|v'|$ would be odd, then $v'u'\lalt v''u''=v'su''\Rightarrow su''\lalt u'\Rightarrow su''v'\lalt u'v'$, that is a contradiction. So, $|v'|$ is even and by definition $v'\lalt v''$. Let us consider the case $v''$ is a prefix of $v'$, i.e. $v'=v''t$, for some string $t$. If $|v''|$ would be even then $v''tu'=v'u'\lalt v''u''\Rightarrow tu'\lalt u''\Rightarrow tu'v''\lalt u''v''$, that is a contradiction. So, $|v''|$ is odd then, by definition, $v'\lalt v''$.

Conversely, let us suppose that   $v'\lalt v''$. If neither $v'$ is prefix of $v''$ nor $v''$ is prefix of $v'$, then $v'u'\lalt v''u''$. Let us suppose now that $v'$ is prefix of $v''$, i.e. $v''=v's$ then $v'$ has even length. Since $w$ is a Galois word, $u'v'\lalt su''v'\Rightarrow u'\lalt su'' \Rightarrow v'u'\lalt v'su''=v''u''$. Let us suppose now that $v''$ is prefix of $v'$, i.e. $v'=v''t$ then $v''$ has odd length. The fact that $w$ is a Galois word implies that $u''v''\lalt tu'v''\Rightarrow u''\lalt tu' \Rightarrow v''tu'=v'u'\lalt v''u''$. \qed
\end{proof}

It is known that the unique Lyndon conjugate of a string $w$ is one of the elements in the non-increasing factorization of $ww$ into Lyndon words~\cite{Duval83}. As proved in \cite{DOLCE_Reut_Rest2018}, this strategy does not work for Galois words. Hence, we introduce a new linear time and space algorithm, named {\sc FindGaloisRotation}, to find, for each primitive string $w$ of length $n$, its unique cyclic rotation that is a Galois word. Our algorithm is a variant of the one in~\cite{Booth80,Lothaire:2005} to find the Lyndon conjugate of a given string. The algorithm {\sc FindGaloisRotation} uses a \emph{border array} $B$ of length $n+1$ that stores in each position $j>0$ the length of the border of the $j$-length prefix of $w[k,(k+j-1)\bmod n]$, i.e. the Galois rotation starting at position $k$, and $B[0]=-1$. 

\begin{algorithm}[t]
\caption{{\sc FindGaloisRotation}}
\label{Alg:Galois}
   \algorithmicrequire{A primitive string $w$ of length $n$}\\
  \algorithmicensure{The starting position $0\leq k<n$ of the cyclic rotation of $w$ that is a Galois word}
    \begin{algorithmic}[1] 
            \State $i\gets 0$;  $j\gets 1$; $k\gets 0$;
            \State $B[0]\gets -1$;
            \While{$k+j<2n$} 
           \If {$j\leq n$}  $B[j] \gets i;$ 
	     \EndIf 
		\While {$i\geq  0$ \KwAnd $w[(k+j)\bmod n]\neq w[(k+i)\bmod n]$} 
		\If {$i\bmod2=0$} 
			\If{$w[(k+j)\bmod n]<w[(k+i)\bmod n]$} 
			\State $k\gets k+j-i$; $j\gets i$;
			\EndIf
             \Else 
			\If{$x[(k+j)\bmod n]>x[(k+i)\bmod n]$} 
			\State $k\gets k+j-i$; $j\gets i$;
			\EndIf
		\EndIf
		\State $i\gets B[i]$;
		\EndWhile
		\State $i\gets i+1$; $j\gets j+1$;\label{line:++}
            \EndWhile\label{mainwhile}
            \State \textbf{return} $k$
    \end{algorithmic}
\end{algorithm}

At each iteration of the main {\bf while} loop (lines 3--13), $k$ is the starting position of the current candidate for the smallest cyclic rotation (with respect to $\palt$ order), $w[k,(k+j-1)\bmod n]$ is a Galois word and $B[j]=i$ is the length of its border. This means that $w[k,(k+i-1)\bmod n]=w[(k+j-i),(k+j-1)\bmod n]$. So, the characters $w[(k+i)\bmod n]$ and $w[(k+j)\bmod n]$ are compared. 
If 
those
characters are equal, the length of the border is increased. 
If
they
are distinct, different alphabet orders are used depending on whether $i$ is even or not, and the value of $k$ is consequently updated. Note that, even if $k$ is changed, the computed value $B[j+1]$ is the same and the values $B[i]$, with $i\leq j$, do not need to be re-computed. 

\begin{theorem}
Given a primitive string $w$, its unique cyclic rotation that is a Galois word can be computed in linear time and space. 
\end{theorem}
\begin{proof}
We note that the auxiliary memory consists solely of the border array and that the execution time depends linearly on the number of comparisons between the characters in~$w$. To prove that {\sc FindGaloisRotation} requires at most $4n-3$ comparisons,  we consider the quantity $2(k+j)-i$ and show that it always increases after each comparison between the characters $w[(k+j)\bmod n]$ and $w[(k+i)\bmod n]$.  If the two characters are equal, then both $i$ and $j$ are increased by one at Line~\ref{line:++}. If the two characters are different, then the quantity $k+j$ remains unchanged and the value of $i$ is decreased. Finally, note that if $n\geq 2$, the quantity $2(k+j)-i$ is equal to $2$ for the first comparison and it is at most $2(2n-1)$, so the overall number of comparisons is at most $4n-3$ as claimed.\qed 
\end{proof}

\begin{figure}[ht]
{\small
$$\arraycolsep=2.5pt
\begin{array}{cccccc}
          a &          n &          a &          n &          a &          b \\
          a &          n &          a &          b &          a &          n \\
          a &          b &          a &          n &          a &          n \\
          b &          a &          n &          a &          n &          a \\
          n &          a &          b &          a &          n &          a \\
          n &          a &          n &          a &          b &          a \\
            &            &            &            &            &            \\
          \multicolumn{6}{c}{M_{\alt}(banana)}            \\ 
\end{array}
\qquad\qquad
\begin{array}{ccccccc}
          \$&          b &          a &          n &          a &          n &      a\\
          a &          n &          a &          n &          a &          \$&      b\\
          a &          n &          a &          \$&            b &          a &     n \\
          a &          \$&          b &          a &          n &          a &          n \\
          b &          a &          n &          a &          n &          a &      \$ \\
          n &          a &          \$&         b &          a &          n &          a \\
          n &          a &          n &          a &         \$&         b &          a \\
           &            &            &            &            &            \\
           \multicolumn{6}{c}{M_{\alt}(banana\$)}            \\ 
\end{array}
\qquad\qquad
\begin{array}{ccccccc}
          \$ &         a &          n &          a &          n &          a &          b \\
          a &          n &          a &          n &          a &          b &          \$\\
          a &          n &          a &          b &          \$&          a &          n \\
          a &          b &          \$ &         a &          n &          a &          n \\
          b &          \$ &         a &          n &          a &          n &          a \\
          n &          a &          b &          \$ &         a &          n &          a \\
          n &          a &          n &          a &          b &         \$ &          a \\
            &            &            &            &            &            \\
          \multicolumn{6}{c}{M_{\alt}(ananab\$)}            \\ 
\end{array}
$$
}
\caption{Left: the matrix $M_{\alt}$ of all cyclic rotations of the word $w = banana$, sorted by using $\palt$-order. The output is $abwt(w)=bnnaaa$. Center:  the matrix $M_{\alt}$ of the word $banana\$$. The output is $abwt(banana\$)=abnn\$aa$. Right: the matrix $M_{\alt}$ of the word $ananab\$$, where $ananab$ is the Galois conjugate of $w$. The output is $abwt(ananab\$)=b\$nnaaa$.} \label{fig:eofsymbol}
\end{figure}

The next corollary shows how to use {\sc FindGaloisRotation} procedure to compute in linear time the $\ABWT$ of an input string without using any end-of-string symbol. 

\begin{corollary}
The $\ABWT$ of a generic string $w$ can be computed in linear time.
\end{corollary}

\begin{proof}
We apply {\sc FindGaloisRotation} to $w$ to find its Galois conjugate $w'$. Then we apply to $w'\$$ the algorithm described in previous section . Note that $\abwt(w)$ can be easily obtained from $\abwt(w'\$)$ by just removing $\$$ from the output (see Fig. \ref{fig:eofsymbol} for an example).\qed
\end{proof}

In literature some improvements of the algorithms for finding the Lyndon conjugate have been proposed~\cite{Shiloach81}. It is open the question whether similar improvements can be found for Galois words, by reducing the number of comparison or the amount of auxiliary memory used by algorithm {\sc FindGaloisRotation}. Moreover, it would be interesting to investigate whether similar strategies can be applied to other generalized $\BWT$s.

\section{Efficient lightweight computation of the ABWT}

On the one hand, in the previous section, we have shown that, for the $\ABWT$, the sorting of cyclic rotations of a word $w$ is equivalent to sorting the suffixes of $w$ when $w$ is a Galois word and hence one can use the sorting of the suffixes of $w$ rather than the sorting of its cyclic rotations. 
This works when is $w$ is a Galois word, because, in this case, $w$ is smaller than any of its rotation, with respect to $\lalt$ order.

On the other hand, in Remark~\ref{rem:time} we have observed that, if we append a unique end-of-string symbol $\$$, which is smaller than any other symbol in the alphabet $\Sigma$,  to the input string, all transformations in the class $BWT_K$ can be computed in linear time by first building the suffix tree for the input string. However, for computing the $\BWT$ this strategy has never been used in practice. The reason is that the algorithms for building the suffix tree, although they take linear time, have a large multiplicative constant and are not fast in practice. In addition, the suffix tree itself requires a space of about ten/fifteen times the size of the input which is a huge amount of temporary space that is not necessarily available (considering also that saving space is the primary reason for using the $\BWT$). For the above reasons the $\BWT$ is usually computed by first building the Suffix Array~\cite{Karkkainen:2003,MF02j} which is the array giving the lexicographic order of all the suffixes of the input string.   

Notice that the use of the end-of-string symbol is problematic because end-of-string symbol could break up a run in the output of the $\ABWT$ and break up the circularity of the input word.
Clearly, if one is not interested to minimize the number of runs (for instance for the compression) and to keep the circularity of the word, then one could apply the algorithm here defined to any word (any rotation) by appending the end-of-string symbol to the input string. 
Indeed, this symbol considerably simplifies the algorithms since it establishes the position of the lexicographically smallest rotation.
Note that, apart from the symbol $\$$, if $w$ is not a Galois word, the output $abwt(w\$)$ may be different from $abwt(w)$ (see Fig. \ref{fig:eofsymbol}). 

A fundamental result on Suffix Array construction is the technique in~\cite{Karkkainen:2006} that, using the concept of  {\em difference cover}, makes it possible to design efficient Suffix Array construction algorithms for different models of computation including RAM, External Memory, and Cache Oblivious. 

In this section we show that this technique can be adapted to compute the $\ABWT$ within the same time bound of the $\BWT$ and we apply it to an arbitrary rotation $w$ of the input string, so we assume that an end-of-string symbol is appended to $w$.




Remark that if we append the end-of-string symbol $\$$ to the string $w$, the $\lalt$-order relation between two suffixes of $w\$$ is determined by using Definition \ref{def:lalt} (case 1). 


To illustrate the idea behind difference cover algorithms, in the following, given a positive integer $v$, we denote by $[0,v)$ the set $\{0, 1, \ldots, v-1\}$.

\begin{definition}\label{def:dc}
A set $D \subseteq [0,v)$ is a {\em difference cover} modulo $v$ if every integer in $[0,v)$ can be expressed as a difference, modulo $v$, of two elements of $D$, i.e.
$$
\{(i-j) \bmod v \mid i,j, \in D\} = [0,v).
$$\qed
\end{definition}


For example, for $v=7$ the set $\{0,1,3\}$ is a difference cover, since $0=0-0$, $1=1-0$, $2=3-1$, $3=3-0$, $4 = 0 - 3 \bmod 7$, and so on. An algorithm by Colbourn and Ling~\cite{ipl/ColbournL00} ensures that for any $v$ a difference cover modulo $v$ of size at most $\sqrt{1.5v} + 6$ can be computed in $O(\sqrt{v})$ time.  The suffix array construction algorithms described in~\cite{Karkkainen:2006} are based on the general strategy shown in Algorithm~\ref{algo:dc}. Steps~3 and~4 rely heavily on the following property of Difference covers: for any $0 \leq i,j< n$ there exists $k<v$ such that 
$(i+k) \bmod v \in D$ and $(j+k) \bmod v \in D$. This implies that to compare lexicographically suffixes $w[i,n-1]$ and $w[j,n-1]$ it suffices to compare at most $v$ symbols since $w[i+k,n-1] $ and $w[j+k,n-1]$ are both sampled suffixes and their relative order has been determined at Step~2.

\begin{algorithm}[t]
\algorithmicrequire{A string $w$ of length $n$ and a modulo-$v$ difference cover~$D$ }\\
\algorithmicensure{$w$'s suffixes in lexicographic order}
\begin{algorithmic}[1]
\State Consider the  $(n|D|)/v$ suffixes $w[i,n-1]$ starting at positions $i$ such that $i\bmod v \in D$. These suffixes are called the {\em sampled suffixes}. 
\State Recursively sort the sampled suffixes
\State Sort non-sampled suffixes
\State Merge sampled and non-sampled suffixes
\end{algorithmic}
\caption{{\sc Difference cover suffix sorting.}\label{algo:dc}}
\end{algorithm}

\def\startpos#1{\stackrel{#1}{ \cdot}\mkern-4mu}

To see how the algorithm works consider for example $v=6$, $D=\{0,1,3\}$ and the string $w=abaacabaacab\$$.  The sampled suffixes are those starting at positions $0,1,3,6,7,9,12$. To sort them, consider the string over $\Sigma^v$ whose elements are the $v$-tuples starting at the sampled positions in the order $0,6,12,1,7,3,9$:
$$
R[0,6] =\; \stackrel{w[0,5]}{abaaca}\;\,
\stackrel{w[6,11]}{baacab}\;\,
\stackrel{w[12,18]}{\$\$\$\$\$\$}\;\, 
\stackrel{w[1,6]}{baacab}\;\,
\stackrel{w[7,12]} {aacab\$}\;\, 
\stackrel{w[3,8]}{acabaa}\;\, 
\stackrel{w[9,14]}{cab\$\$\$}
$$
(note we have added additional \$'s to make sure all blocks contain $v$ symbols). The difference cover algorithm then renames each $v$-tuple with its lexicographic rank. Since
$$
\$\$\$\$\$\$ \plex aacab\$ \plex abaaca \plex  acabaa  
\plex baacab \plex cab\$\$\$
$$
the renamed string is $R_{bwt} = [2, 4, 0, 4, 1, 3, 5]$. The crucial observation is that the suffix array for $R_{bwt}$, which in our example is $SA(R_{bwt}) = [2,4,0,5,1,3,6]$, provides the lexicographic ordering of the sampled suffixes. Indeed $R[2] = w[12,18]$ is the smallest sampled suffix, followed by $R[4]=w[7,12]$, followed by $R[0] = w[0,5]$, and so on. The Suffix Array of $R_{bwt}$ is computed with a recursive call at Step~2, and is later used in Steps~3 and~4 to complete the sorting of all suffixes. 


To compute $\abwt(w)$ with the difference cover algorithm, we consider the same string $R$ but we sort the $v$-tuples according to the {\em alternating} lexicographic order. Since 
$$
\$\$\$\$\$\$ \palt acabaa \palt abaaca \palt aacab\$  
\palt baacab \palt cab\$\$\$
$$
it is $R_{abwt} = [2, 4, 0, 4, 3, 1, 5]$. Next, we compute the Suffix Array of $R_{abwt}$ according to the {\em standard} lexicographic order, $SA(R_{abwt}) = [2,5,0,4,1,3,6]$. We now show that, since $v=6$ is even, $SA(R_{abwt})$ provides the correct {\em alternating} lexicographic order of the sampled suffixes. 

To see this, assume $w[i,n-1]$ and $w[j,n-1]$ are sampled suffixes with a common prefix of length $\ell$. Hence $w[i,i+\ell-1]=w[j,j+\ell-1]$ while $w[i+\ell]\neq w[j+\ell]$. Let $R_{abwt}[t_i]$ and $R_{abwt}[t_j]$ denote the entries in $R_{abwt}$ corresponding to $w[i,i+v-1]$ and $w[j,j+v-1]$. By construction, the suffixes $R_{abwt}[t_i,r]$ and $R_{abwt}[t_j,r]$ have a common prefix of $\lfloor\ell/v\rfloor$ entries (each one corresponding to a length-$v$ block in~$w$) followed respectively by $R_{abwt}[t_i + \lfloor\ell/v\rfloor]$ and $R_{abwt}[t_j+\lfloor\ell/v\rfloor]$ which are different since they correspond to the $v$-tuples $R[t_i + \lfloor\ell/v\rfloor]$ and $R[t_j+\lfloor\ell/v\rfloor]$ which differ since they contain the symbols $w[i+\ell]$ and $w[j+\ell]$ in position $(\ell \bmod v)$. Assuming for example that $w[i+\ell] < w[j+\ell]$, it is $w[i,n-1] \lalt w[j,n-1]$ depending on whether $\ell$ is odd or even. Since $v$ is even, $\ell$ is even iff $\ell \bmod v$ is even, hence
\begin{align*}
w[i,n-1] \lalt w[j,n-1] 
&\Longleftrightarrow  
R[t_i + \lfloor\ell/v\rfloor] \palt R[t_j+\lfloor\ell/v\rfloor] \\
&\Longleftrightarrow 
R_{abwt}[t_i + \lfloor\ell/v\rfloor] < R_{abwt}[t_j+\lfloor\ell/v\rfloor]\\
&\Longleftrightarrow 
R_{abwt}[t_i, r] \plex R_{abwt}[t_j,r]
\end{align*}
which shows that the standard Suffix Array for $R_{abwt}$ provides the alternating lexicographic order of the sampled suffixes, as claimed. 

Summing up, after building the string $R_{abwt}$, 
at Step~2 we compute $SA(R_{abwt})$ using the standard Difference cover algorithm, or any other suffix sorting algorithm. Finally, Step~3 and~4 can be easily adapted to the alternating lexicographic order using its property that for any symbol $c\in\Sigma$ and strings $\alpha, \beta \in\Sigma^*$ it is
\begin{equation}\label{eq:reverse}
c\alpha \lalt c\beta \;\;  \Longleftrightarrow\;\; \beta \lalt \alpha.
\end{equation}
For example, to compare $w[0,12]$ with $w[5,12]$ we notice that after $w[0]=w[5]$ we reach the sampled suffixes $w[1,12]$ and $w[6,12]$ corresponding to $R[3,6]$ and $R[1,6]$. According to $SA(R_{abwt})$ it is $R[1,6] \plex R[3,6]$ which implies $w[6,12] \lalt w[1,12]$, and by~\eqref{eq:reverse} $w[0,12] \plex w[5,12]$. Since from the alternating lexicographic order of $w$'s suffixes $\abwt(w)$ can be computed in linear time, the results in~\cite{Karkkainen:2006} can be translated as follows.

\begin{theorem}
Given a string $w[0,n-1]$ ending with a unique end-of-string symbol, we can compute $\abwt(w)$ in RAM in $\Oh(n)$ time, or in $\Oh(n\log\log n)$ time but using only $n + o(n)$ words of working space. In external memory, using $D$ disks with block size $B$ and a fast memory of size $M$, $\abwt(w)$ can be computed in $\Oh(\frac{n}{DB}\log_{M/B} n/B)$ I/Os and $\Oh(n \log_{M/B} n/B)$ internal work.\qed
\end{theorem}

We point out that the above results cannot be easily extended to the generalized BWTs introduced in Section~\ref{sec:abwt}. The reason is that Step~3 and~4 of the modified Difference cover algorithm hinge on Property~\eqref{eq:reverse} that does not hold in general for the lexicographic orders introduced by Definition~\ref{def_perm}.
}

\ignore{
\section{Conclusions}
In this paper we have explored some combinatorial and algorithmic issues on the Alternating Burrows-Wheeler Transform ($\ABWT$). The results provided in this paper show a deep analogy between $\BWT$ and $\ABWT$ giving therefore a strong indication that the $\ABWT$ can be a good candidate to replace the $\BWT$ in several contexts. 

The peculiar properties of $\ABWT$ and its connection with Galois words, suggest that combinatorial aspects of $\ABWT$ remain to be explored. It is known, for instance, that there exists a relationship between the suffix array of a word and its unique factorization into a non-increasing sequence of Lyndon words (\cite{HohlwegReutenauer2003, MantaciRRS14}). 
Although it has been proved that it is possible to factorize a word into a non-increasing sequence of Galois words, such a factorization does not coincide, in general, with the smallest number of factors. It would be interesting to investigate whether, by using the notions of suffix array and $\ABWT$, it is possible to determine the factorization containing the smallest number of Galois words. 

Motivated by the discovering of the $\ABWT$, in \cite{GMRS_CPM2019arxiv} the authors explore a class of string transformations that includes the one considered in this paper. In this larger class, the cyclic rotations of the input string are sorted using an alphabet ordering that depends on the longest common prefix of the rotations being compared. Somewhat surprisingly some of the transformations in this class do have the same properties of the \BWT and \ABWT, thus showing that our understanding of these transformations is still incomplete. 
}
\ignore{
In this paper we have considered a class of word transformations $BWT_K$ defined in terms of a $k$-tuple $K$ of alphabet orderings. This class includes both the original $\BWT$ and the Alternating $\BWT$ proposed in~\cite{GesselRestivoReutenauer2012}. We have proved that each transformation has combinatorial properties useful to perform a role of a booster of a memoryless compressor in the same way as the $\BWT$. 
We have also introduced the notion of rank-invertibility to explore which transformations can be efficiently inverted. We have proved that $\ABWT$ and $\BWT$  are the only transformations in the class that are rank-invertible. 
Our conclusion is that the $\ABWT$, although it is a combinatorial tool with very peculiar properties, can be considered a good candidate to replace the $\BWT$ in several contexts. We are therefore interested in further studying the combinatorial properties of $\ABWT$, with the main purpose of finding new characterizations of word families for which $\ABWT$ assumes a significant behavior, for instance in terms of the number of consecutive equal-letter runs produced. More in general, since the compressibility of a text is related with the number of equal-letter runs, it would be interesting to study how the bounds on the number of equal-letter runs varies according to the order taken into consideration.

From an algorithmic point of view, we are interested to explore new block sorting-based transformations in order to investigate the combinatorial properties that not only guarantee good compression performance, but also support efficient search operations in compressed indexing data structures.
}

\section*{Acknowledgements}
RG and GM are partially supported by INdAM-GNCS project 2019 \vir{Innovative methods for the solution of medical and biological big data} 
and 
MIUR-PRIN project \vir{Multicriteria Data Structures and Algorithms: from compressed to learned indexes, and beyond} grant n.~2017WR7SHH.

GR and MS are partially supported by the project MIUR-SIR CMACBioSeq \vir{Combinatorial methods for analysis and compression of biological sequences} grant n.~RBSI146R5L. 

\section*{References}

\bibliographystyle{plain}

\bibliography{BWT}  

\end{document}